\newcommand{\blind}{1}
\newcommand\dd{\mathop{}\!\mathrm{d}}
\newcommand{\bfeta}{\boldsymbol{\eta}}
\newcommand{\bftheta}{\boldsymbol{\theta}}
\newcommand{\bfp}{\boldsymbol{p}}
\newcommand{\bfX}{\boldsymbol{X}}
\newcommand{\bfB}{\boldsymbol{B}}
\newtheorem{theorem}{Theorem}[section]
\newtheorem{corollary}{Corollary}[section]
\begin{document}

\def\spacingset#1{\renewcommand{\baselinestretch}%
{#1}\small\normalsize} \spacingset{1}



\if1\blind
{
  \title{\bf Semiparametric estimation for multivariate Hawkes processes using dependent Dirichlet processes:  An application to order flow data in financial markets}
  \author{Alex Ziyu Jiang and Abel Rodríguez\hspace{.2cm}\\
    Department of Statistics, University of Washington}
  \maketitle
} \fi

\if0\blind
{
  \bigskip
  \bigskip
  \bigskip
  \begin{center}
    {\LARGE \bf A Flexible Multivariate Hawkes Process Model For Order Flow Data}
  \end{center}
  \medskip
} \fi

\bigskip
\begin{abstract}
The order flow in high-frequency financial markets has been of particular research interest in recent years, as it provides insights into trading and order execution strategies and leads to better understanding of the supply-demand interplay and price formation. In this work, we propose a semiparametric multivariate Hawkes process model that relies on (mixtures of) dependent Dirichlet processes to  analyze order flow data. Such a formulation avoids the kind of strong parametric assumptions about the excitation functions of the Hawkes process that often accompany traditional models and which, as we show, are not justified in the case of order flow data.  It also allows us to borrow information across dimensions, improving estimation of the individual excitation functions.  To fit the model, we develop two algorithms, one using Markov chain Monte Carlo  methods and one using a stochastic variational approximation.  In the context of simulation studies, we show that our model outperforms benchmark methods in terms of lower estimation error for both algorithms.  In the context of real order flow data, we show that our model can capture features of the excitation functions such as non-monotonicity that cannot be accommodated by standard parametric models.
\end{abstract}

\noindent%
{\it Keywords:}  Dependent Dirichlet Processes, Order Book Dynamics, High-Frequency Trading, Markov chain Monte Carlo, Variational Inference
\vfill

\newpage
\spacingset{1.9} 
\section{Introduction}
\label{sec:intro}

{In the past two decades, the proliferation of electronic trading systems has led to financial instruments such as stocks and futures being more frequently traded in order-driven markets. In these order-driven markets, buy and sell orders are submitted by traders to an electronic platform, which matches them with the best available offers (e.g., see \citealp{gould2013limit}). Due to the nature of electronic trading, the arrival of orders tends to have high-frequency, with orders usually arriving within milliseconds of each other. The order flow in such markets is of particular research interest, as it not only provides insights to high-frequency trading and order execution strategies \citep{alfonsi2010optimal}, but also aids in understanding the interplay of supply and demand and their roles in price formation \citep{cont2011statistical}. 


A number of approaches have been proposed to analyze data from order-driven markets, including queueing systems \citep{muni2015order}, stochastic partial differential equations \citep{cont2021stochastic} and Hawkes processes \citep{rambaldi2017role}. 
In this work, we focus on multivariate Hawkes Processes (MHPs, e.g., see \citealp{Hawkes1971} and \citealp{liniger2009multivariate}) models. 
MHPs have been widely used to model sequences of events that demonstrate \textit{self-} and \textit{mutually-exciting} behaviors, i.e., patterns in which the likelihood of events increases after the occurrence of others. MHPs can be characterized through their conditional intensity functions, which describe the instantaneous rate of arrivals of new events. The excitation function is a key module of the conditional intensity function, as it controls how past events cause the conditional intensity function to change, and how such intensity decays when no new events are observed in the meantime. Excitation functions are often modeled parametrically (e.g., using exponential or polynomial functions) that assume monotonic excitation decay \citep{Hawkes1971}. However, the inter-event times in many real-world examples tend to have complex patterns, motivating the need for flexible excitation functions (e.g., see \citealp{Markwick2020} and \citealp{Rodriguez2017}).}

Nonparametric models for the intensity function associated with a Hawkes processes have been investigated in the past. For example, \cite{marsan2008extending} proposed a histogram estimator that can be computed via the expectation-maximization (EM) algorithm. The efficacy of their method is further studied in \cite{sornette2009limits}.  In related work, \cite{lewis2011nonparametric} and
\cite{zhou2013learning} developed  penalized maximum likelihood where the regularization term favors smooth estimates of the intensity functions. Alternatively, \cite{reynaud2010adaptive} developed penalized projection estimators. Their work was extended by \cite{hansen2015lasso}, who  provided non-asymptotic theoretical guarantees on model selection. On yet another track, \cite{bacry2015hawkes} and \cite{bacry2016first} proposed a nonparametric estimation framework for MHP models based on the discretization of a Wiener-Hopf system that relates the excitation functions to the second order statistics of the MHP. Under a Bayesian framework, \cite{zhang2020variational} and \cite{zhou2020efficient} considered flexible models for the excitation functions through Gaussian process priors coupled with squared or sigmoid link functions. These methods lead to flexible models for the excitation function, but their theoretical properties are not well understood. In contrast, \cite{donnet2020nonparametric} modeled the excitation functions using nonparametric mixtures and established posterior concentration rates for Bayesian nonparametric estimation of MHP, under the finite support assumption for the excitation functions.  A related approach was developed in parallel in \cite{Markwick2020}.

In this paper, we introduce a Bayesian semiparametric model for MHPs that builds on the ideas of \cite{donnet2020nonparametric} and \cite{Markwick2020} but addresses various practical questions that have so far remained open.  One challenge associated with the estimation of MHPs is that the number of parameters grows quadratically with the number of dimensions. Hence, with a dataset of moderate size, modeling each dimension of the process independently can lead to inefficiencies. Motivated by the observation that excitation functions often look similar across different dimensions, we propose a hierarchical modeling approach based on mixtures of nonparametric mixtures. More specifically, we adapt the approach introduced in \cite{Muller2004}, which models each of the excitation functions as a mixture of an idiosyncratic component and a common component shared by all excitation functions, and study some of the properties of such formulation. A second challenge in implementing MHP models is computational in nature. As is the case more generally, Markov chain Monte Carlo (MCMC) algorithms are the most common approach to computation for Bayesian models for Hawkes processes (e.g., see \citealp{Rasmussen2013}). 
However, MCMC algorithms for Hawkes process models are often too slow even for moderate sample sizes because, except for special cases such as the exponential excitation function, their complexity is quadratic in both the number of observations and the number of dimensions. This challenge is amplified in the case of  nonparametric models.  To address it, we expand on previous work on the use of stochastic gradient methods for MHPs (e.g., see \citealp{jiang2024improvements}), and develop a scalable stochastic variational inference (SVI) algorithm that can be used to fit our model.  An important part of this development involves a carefully comparison of the performance of SVI and MCMC methods, both in terms of accuracy and speed. {Furthermore, to illustrate the performance of the model, we applied our method to study the limit order book data for Amazon obtained from LOBSTER \citep{huang2011lobster}. Our analysis indicates that some of the excitation functions that arise from the analysis of this kind of data can have features such as non-monotonicity that are not captured by the kind of parametric forms that are commonly used in practice.}


In summary, we make three key contributions in this paper:  (1) we propose a novel and flexible model for linear MHPs in which the various excitation functions are assigned a joint nonparametric prior that allows us to efficiently borrow information, (2) we develop MCMC and SVI algorithms for estimation and prediction in the context of this model, and thoroughly evaluate their relative performance, and (3) we illustrate the need for both nonparametric inference and fast computation in the context of a real-world application. 

The remainder of the paper is structured as follows. In Section 2 provides a brief review of multivariate Hawkes process. Section 3 outlines our model and discusses some of its properties. Section 4 describes the two computing algorithms we employ for our model, a Markov chain Monte Carlo algorithm and a stochastic gradient variational approximation.  Sections 5 and 6 discuss the results from simulations and real-world applications. Finally, Section 7 presents our conclusions and points out potential future directions for research.

\section{Multivariate Hawkes Processes}
\label{sec:model}

Let $N^{(1)}(t), \dots, N^{(K)}(t)$ be a collection of $K$ point processes defined on the positive real line $\mathbb{R}^{+}$, where  $N^{(k)}(t)$ represents the number of events on dimension $k$ that occur on the interval $[0,t]$. We denote a generic set of observations from this process by $\mathbf{X} = \{(t_i,d_i) : i = 1,\dots,n\}$, where $t_i \in \mathcal{R}^{+}$ represents the timestamp at which the $i$-th event occurs, and $d_i \in\{1, \ldots, K\}$ represents the dimension in which it occurs. Then, $\mathbf{X}$ follows a multivariate Hawkes process \citep{Hawkes1971, liniger2009multivariate} if the conditional intensity function on dimension $k$ has the following form:
$$
\lambda_{k}(t) \equiv \lim _{h \rightarrow 0} \frac{\mathbb{E}\left[N^{(k)}(t+h)-N^{(k)}(t) \mid \mathcal{M}_t\right]}{h}=\mu_{k}+\sum_{\ell=1}^K \sum_{\{ i: t_i<t, d_i=k \} } 
\alpha_{\ell,k} \tilde{\phi}_{ \ell,k}\left(t-t_i\right),
$$
where $\mathcal{M}_t$ denotes the subset of $\mathbf{X}$ for which $t_i < t$, $\mu_{k} > 0$ is the background intensity for dimension $k$, $\alpha_{\ell, k} > 0$ is the parameter that controls the strength by which past events from dimension $\ell$ influence the occurrence of new events on dimension $k$, and $\tilde{\phi}_{\ell,k}(\cdot): \mathbf{R}^+ \rightarrow \mathbf{R}^+$ is the (normalized) excitation function that controls how such influence decays over time. Note that we require that $\int_0^{\infty} \tilde{\phi}_{\ell,k,}(s)\text{d}s = 1$, which ensures that $\tilde{\phi}_{\ell,k}$ and $\alpha_{\ell,k}$ are identifiable.  The log-likelihood for the MHP can then be expressed as \citep{daley2008}:
\begin{equation}
\label{eq:obslik}
\begin{aligned}
\mathcal{L}\left(\mathbf{X} \mid \{ \mu_k \}, \{ \alpha_{k,\ell} \},  \{ \phi_{k,\ell} \}\right) &= \sum_{k=1}^K \sum_{d_i = k}  \log \lambda_{k}\left(t_{i}\right)- \sum_{k=1}^K\int_{0}^T \lambda_{k}(s) ds \\
&= \sum_{k=1}^K \sum_{d_i =k}  \log \left(\mu_k  +  \sum_{k=1}^K \sum_{\substack{\{ (i,j: j < i \\ d_{j} = k, d_{i} = \ell \} }}\alpha _{\ell,k}\tilde{\phi}_{\ell,k}(t_i - t_j)\right)- \sum_{k=1}^K \mu_k T  \\
& \;\;\;\;\;\;\;\;\;\;\;\;\;\;\;\;\;\;\;\;\;\;\;\;\;\;\;\;\;\;\;\;\;\;\;\;\;\;\;\;\;\;\; - \sum_{k=1}^K \sum_{\ell=1}^K \alpha _{\ell,k} \sum_{\{i:d_i =k \}} \tilde{\Phi}_{\ell,k}(T-t_i),
\end{aligned}    
\end{equation}
where 
$\tilde{\Phi}_{\ell,k}(t) = \int_0^t \tilde{\phi}_{\ell,k}(s)ds$. 

An alternative construction of the MHP is as a multivariate branching process 
in which the first generation of events in dimension $k$ (often called ``{immigrants}'' in the literature) arise from a homogeneous Poisson process with rate $\mu_{k}$, and the points in subsequent generations are generated from non-homogenous Poisson processes with rates given by the $\alpha_{\ell,k}$s and the interarrival times are controlled by the $\phi_{\ell,k}$s.  See \citealp{Hawkes1974} and \citealp{liniger2009multivariate} for details.  While the branching structure is typically latent (i.e., not observed), this construction both provides insight into the properties of the model and can be exploited to facilitate computation.  For example, this construction makes it clear that an MHP is stationary if and only if the spectral radius of the matrix $[\alpha_{\ell,k}]$ is less than 1.   In the sequel, we use the binary matrix $\bfB$ where 
\begin{align*}
B_{j,j} &=
\begin{cases}
1 & \quad \text{the $j$-th event is an immigrant} \\
0 & \quad \text{otherwise,} 
\end{cases}    \\
B_{i,j} &=
\begin{cases}
1 & \quad \text{the $j$-th event is an offspring of the $i$-th event} \\
0 & \quad \text{otherwise,}
\end{cases}   
\end{align*}
to encode the  latent branching structure associated with a realization of an MHP.  The augmented likelihood for the data is then given by
\begin{multline} \label{eq:complik}
\mathcal{L}\left(\mathbf{X}, \bfB \mid \{ \mu_k \}, \{ \alpha_{k,\ell} \},  \{ \phi_{k,\ell} \} \right) = \sum_{k=1}^K \sum_{\ell=1}^K\left[ \left|O _{k,\ell}\right|\left(\log \alpha _{k,\ell}\right)- \sum_{\substack{\{ (i,j) : i < j \\ d_{i} = k, d_{j} = \ell \}}} B_{i,j}\tilde{\phi}_{k,\ell}(t_j - t_i)\right] \\
+\sum_{\ell=1}^K\left|I_\ell\right| \log \mu_\ell- \sum_{\ell=1}^K \mu_\ell T - \sum_{k=1}^K \sum_{\ell=1}^K \alpha _{k,\ell} \sum_{\{i: d_i = k\}} \tilde{\Phi}_{k, \ell} (T - t_i),
\end{multline}
where $|I_{\ell}| = \sum_{d_i=\ell} I(B_{ii} = 1)$ 
and $|O_{k,\ell}| = \sum_{d_i = k, d_j = \ell, i<j} I(B_{ij} = 1)$ denote the number of immigrants for dimension $k$ and the number of offpring on dimension $k$ who arise from points on dimension $\ell$.


\section{Nonparametric Bayesian modeling of excitation functions for multivariate Hawkes processes}
\label{sec:bnp}

As we discussed in the introduction, efficient and flexible estimation of the excitation functions is one of the key challenges when working with MHPs.  Because the normalized excitation functions are constrained to integrate out to one, this problem can be reduced to one of simultaneous estimation for a (finite) collection of densities.

The literature on Bayesian estimation for collections of densities is dominated by methods based on (dependent) nonparametric mixtures (e.g., see \cite{rodriguez2013nonparametric} and \citealp{quintana2022dependent}).  In the case of countable, exchangeable collections of  densities, examples of such dependent nonparametric mixture models include the Hierarchical Dirichlet process (HDP, \citealp{teh2004sharing}, the Nested Dirichlet process (NDP, \citealp{rodriguez2008nested}), mixture of mixture approaches such as those in \cite{Muller2004}, and the probit stick-breaking process \citep{rodriguez2011nonparametric}, among others. 

In the sequel, we adapt the methodology introduced in \cite{Muller2004} to the estimation of excitation functions in multivariate Hawkes processes.  More specifically, we consider mixtures of (scaled) Beta kernels of the form
\begin{align}
	\label{eq:dpm}
	\tilde{\phi}_{k,\ell}(t) = \int f_{\text{Beta}}(t \mid a,b, T_0) \text{d}G_{k,\ell}(a,b),
\end{align}
where $\{ G_{k,\ell} \}$ is a collection of almost-surely discrete mixing measures whose joint prior is described later in this section, and the kernel is given by
\begin{align*}
    f_{\text{Beta}}( \cdot \mid a, b, T_0) &:= \frac{\Gamma(a+b)}{\Gamma(a)\Gamma(b)} \frac{1}{T_0} \left(\frac{t}{T_0}\right)^{a-1}\left(1-\frac{t}{T_0}\right)^{b-1}, & a > 0, b > 0, 0 < t < T_0 .
\end{align*}
The use of mixtures of Beta kernels allows the excitation function to have a flexible form, including non-monotonic decays. In fact, this type of mixtures have large support support of the space of absolutely continuous measures with bounded support on $[0,T_0]$ (see Section \ref{se:priorsupport}).  The fact that the kernel has compact support also helps computationally without severely impacting the ability of the model to capture key features of the excitation function.  Indeed, note that the compact support means that the number of observations pairs involved in the calculation of \eqref{eq:obslik} and \eqref{eq:complik} grows linearly rather than quadratically with $n$.  On the other hand, because the intensity function must eventually be strictly decreasing, in most cases a good approximation of intensity functions with infinite support can be obtained by choosing $T_0$ large enough.  A similar approach, sometimes called tapering, is often used to speed up computation for Gaussian process models (e.g., see \citealp{furrer2006covariance} and \citealp{kaufman2008covariance}). 


Previous approaches to Bayesian nonparametric modeling of multivariate Hawkes process models (e.g., \citealp{donnet2020nonparametric} and \citealp{Markwick2020}) have relied on independent priors on each of the mixing distributions $G_{k,\ell}$.  Instead, we propose to propose to model them jointly using further mixtures of the form
\begin{align*}
	G_{k,\ell}(\cdot)&=\varepsilon H_0(\cdot)+(1-\varepsilon) H_{k,\ell}(\cdot), & k,\ell &= 1,\dots,K,
\end{align*}
where $0 \le \varepsilon \le 1$, and $H_0$ and the various $H_{k,\ell}$s are in turn given independent priors as described below.  We can think of $H_0$ as a ``common component'' that captures features shared across all dimension pairs, and of $H_{k, \ell}$ as an ``idiosyncratic'' component that captures features that are specific to each.  The parameter $\varepsilon$ controls the relative weight associated with each of these two components and, therefore, the level of dependence across the $G_{k,\ell}$s.  Indeed, note that setting $\varepsilon = 0$ corresponds to assigning mutually independent priors to the $G_{k,\ell}$s, whereas $\varepsilon = 1$ implies that the intensity functions are exactly identical (perfect dependence) across all pairs. In practice, we expect  $\varepsilon$ to lie somewhere in between these two extremes, so we treat it as an unknown parameter that needs to be estimated from the data.

We model the common and the idiosyncratic components as independent realization from  Dirichlet process priors, so that
\begin{align*}
  H_0(\cdot) &= \sum_{h=1}^{\infty} w^{0}_{h} \delta_{\left( a^{0}_{h}, b^{0}_{h} \right)}  & 
  H_{k,\ell}(\cdot) &= \sum_{h=1}^{\infty} w^{k,\ell}_{h} \delta_{\left( a^{k,\ell}_{h}, b^{k,\ell}_{h} \right)}  
\end{align*}
where $\delta_{\boldsymbol{\theta}}(\cdot)$ denotes a point mass at $\boldsymbol{\theta}$,
$a_h^{0} \sim \text{Gamma}\left(c_{a}^{C}, d_{a}^{C}\right)$, $b_h^{0} \sim \text{Gamma}\left(c_{b}^{C}, d_{b}^{C}\right)$, $a_h^{k,\ell} \sim \text{Gamma}\left(c_{a}^{I}, d_{a}^{I}\right)$ and $b_h^{k,\ell} \sim \text{Gamma}\left(c_{b}^{I}, d_{b}^{I}\right)$ independently for all $h=1,2, \ldots$ and $k,\ell = 1,\ldots, K$, and $w^{0}_{h} = v^{0}_{h} \prod_{r < h} (1 - v^{0}_{h})$ and $w^{k,\ell}_{h} = v^{k,\ell}_{h} \prod_{r < h} (1 - v^{k,\ell}_{h})$, with $v^{0}_{h} \sim \text{Beta}(\gamma,1)$ and $v^{k,\ell}_{h} \sim \text{Beta}(\gamma,1)$ also independent for  all $h=1,2, \ldots$ and $k,\ell = 1,\ldots, K$.  Allowing different parameters for the centering measure of the common and idiosyncratic components allows us to potentially encode relevant prior information.  For example, it is common to assume that excitation functions are roughly decreasing, so we might want to reflect that in the shape of the common component by favoring values of $a_h^{0} \ll  1$ and $b_h^{0} \gg 1$.  On the other hand, we might expect idiosyncratic components to reflect deviations from monotonicity, so we might want their centering measure to favor less extreme values for $a_{h}^{{k,\ell}}$ and $b_{h}^{k,\ell}$.  Finally, and in a similar spirit, we model the mixture weight $\epsilon$ using a uniform distribution on $[0,1]$.

It is worthwhile noting that the model in \cite{Muller2004} has sometimes been criticized for being overparameterized.  However, 
unlike other applications (e.g., see \cite{Rodriguez2017}),
we are interested in estimating the aggregate excitation functions $\{ \tilde{\phi}_{k,l}(t) \}$ defined in \eqref{eq:dpm}, which are themselves identifiable.  This alleviates any concerns about overparameterization in this particular context.

The model is completed by eliciting priors for the parameters $\mu_1, \ldots, \mu_K$ and the coefficients $\{ \alpha_{k, \ell} \}$ and the concentration $\gamma$.  In particular, we set 
$\mu_{\ell}   \sim 
\operatorname{Gamma}\left(e_{\ell}, f_{\ell}\right)$ and $\alpha_{k,\ell} \sim 
\operatorname{Gamma}\left(g_{k,\ell}, h_{k,\ell}\right)$ independently for all $k,\ell = 1, \dots, K$.  

\subsection{Prior support}\label{se:priorsupport}

An important consideration when designing nonparametric priors is their associated support (e.g., see \citealp{walker2004priors}).  This question has been well studied for nonparametric mixture priors placed on a single unknown distribution, and they naturally extend to situations in which independent priors are assigned to individual members of a countable collection of distribution.  However, results for dependent priors on collections of distribution have been less studied.

In order to show that our nonparametric prior has large  Kullback–Leibler support, we discuss first a simpler result that applies to the individual priors we place on the common and idiosyncratic components:

\begin{theorem}\label{th:support}
Let $\mathcal{H}$ be the set of all absolutely continuous densities on $[0,T_0]$, and $\Pi$ be the prior on $\mathcal{H}_0$ induced by a Dirichlet process mixture of the form
$	 \int f_{\text{Beta}}(t \mid a,b, T_0) \text{d}H(a,b)$, 
where $H \sim \text{DP}(\gamma, \operatorname{Gamma}\left(c_a, d_a\right) \times \operatorname{Gamma}\left(c_b, d_b\right))$.  Then, $\Pi$ has full Kullback-Leibler support on $\mathcal{H}$, i.e., for any $\phi^{0} \in \mathcal{H}$ and $\epsilon > 0$ we have
$$
\Pi\left( \left\{ \phi : D_{KL}(\phi^{0} \, || \, \phi) < \epsilon \right\}   \right) > 0,
$$
where
$D_{KL}(\phi^{0} \, || \, \phi) = \int \phi^{0}(s) \log (\phi^{0}(s)/\phi(s)) \dd s$ is  the Kullback-Leibler divergence between $\phi^{0}$ and $\phi$.


\end{theorem}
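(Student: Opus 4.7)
The plan is to combine the classical Bernstein--polynomial density approximation with Ferguson's weak support property for the Dirichlet process. First I would rescale by $s = t/T_0$ to absorb $T_0$ and reduce the problem to standard Beta kernels on $[0,1]$. The overall strategy is a chain of three approximations $\phi^{0} \to \phi^{*} \to \phi_k \to \phi_{k,\eta}$, where $\phi^{*}$ is a continuous density bounded below by a positive constant, $\phi_k$ is a Bernstein mixture (a Beta mixture with atoms at integer parameters), and $\phi_{k,\eta}$ is a Beta mixture against a continuous mixing distribution that the Dirichlet process can approximate weakly.

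For the first approximation, given $\epsilon > 0$, I would set $\phi^{*} = (1-\delta)(\phi^{0} * \rho_\sigma) + \delta$ for a smooth mollifier $\rho_\sigma$ (extended by reflection at the endpoints of $[0,1]$) and $\delta > 0$ small, so that $\phi^{*}$ is continuous on $[0,1]$ and bounded below by $\delta$. A truncation-then-dominated-convergence argument (first truncate $\phi^{0}$ at a large level $M$, then pass $\sigma,\delta \to 0$) gives $D_{KL}(\phi^{0} \,\|\, \phi^{*}) < \epsilon/3$. For the second approximation, I would invoke the Bernstein approximation: the mixture
\[
\phi_k(t) = \sum_{j=1}^k w_{j,k}^{*}\, f_{\text{Beta}}(t \mid j, k-j+1, 1),\qquad w_{j,k}^{*} = F^{*}(j/k) - F^{*}((j-1)/k),
\]
converges uniformly to $\phi^{*}$ as $k \to \infty$ (Petrone 1999). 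Because $\phi^{*} \ge \delta$, uniform convergence implies $D_{KL}(\phi^{*} \,\|\, \phi_k) \to 0$, so for some $k$ this is below $\epsilon/3$. For the third step, I would replace each atom $(j, k-j+1)$ of the discrete mixing measure by a continuous distribution supported in a small $\eta$-box around it; joint continuity of $(a,b) \mapsto f_{\text{Beta}}(t \mid a, b, 1)$ on compact sets bounded away from the boundary produces a density $\phi_{k,\eta}$ within $\epsilon/3$ of $\phi_k$ in sup-norm, hence in KL given the preserved lower bound.

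The remaining step is to transfer to the Dirichlet process. Let $H^{*}$ denote the continuous mixing distribution used to build $\phi_{k,\eta}$. Since the base measure $\operatorname{Gamma}(c_a, d_a) \times \operatorname{Gamma}(c_b, d_b)$ has full support on $\mathbb{R}_{+}^{2}$, Ferguson's weak-support theorem ensures that the DP assigns positive probability to every weak neighborhood of $H^{*}$. A continuous-mapping argument, using that $(a,b) \mapsto f_{\text{Beta}}(t \mid a, b, 1)$ is bounded and equicontinuous in $t$ on a compact neighborhood of $\mathrm{supp}(H^{*})$, translates weak convergence $H_n \Rightarrow H^{*}$ into sup-norm convergence of the induced mixtures, and then into KL convergence since the densities stay bounded below. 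Chaining the three approximations shows that $\{\phi : D_{KL}(\phi^{0} \,\|\, \phi) < \epsilon\}$ contains a weak neighborhood of $H^{*}$, which the DP charges with positive probability.

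The main obstacle is ensuring the KL bounds actually compose. Uniform convergence of densities does not imply KL convergence in general, so one must maintain a positive uniform lower bound on every approximating density. This is the purpose of mixing with the uniform in the definition of $\phi^{*}$: once $\phi^{*}$ is bounded away from zero, the Bernstein mixture, its continuous-atom perturbation, and ultimately the DP-sampled density can all be forced to inherit a comparable lower bound by choosing $\eta$ and the DP neighborhood small enough. A subsidiary technicality is the passage from a generic absolutely continuous $\phi^{0}$ to the smooth strictly positive $\phi^{*}$ \emph{in KL} rather than merely in $L^{1}$, which is what motivates the truncate-then-mollify construction.
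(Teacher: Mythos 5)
Your proof is essentially sound and shares the same backbone as the paper's --- both arguments pivot on the Bernstein-polynomial approximation, i.e.\ a finite Beta mixture $\sum_j w_j f_{\text{Beta}}(\cdot\mid j, k-j+1)$ with weights read off from $\phi^{*}$ --- but the two routes to prior positivity differ. The paper follows Wu and Ghosal's Conditions A1--A3 directly: it keeps the approximating mixing measure \emph{discrete}, builds explicit open sets $\mathcal{C}_{\omega}$ of weight vectors and $\mathcal{C}_{ab}^{h}$ of atom locations, controls the log-likelihood ratio $\log\bigl(f_{P_\varepsilon}/f_P\bigr)$ on the interior $(d,1-d)$ by a first-order Taylor expansion of $\log f_{\text{Beta}}$ in $(a,b)$ (using concavity for the remainder), handles the boundary $(0,d]\cup[1-d,1)$ by a uniform bound on the Beta ratio, and concludes because the DP charges any nonempty open set of discrete measures of this form. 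You instead pass through a \emph{continuous} mixing measure $H^{*}$, invoke Ferguson's weak-support theorem, and convert weak convergence to sup-norm convergence of the mixtures by equicontinuity. Your route buys two things: (i) the preliminary mollification $\phi^{0}\to\phi^{*}$ explicitly handles a general absolutely continuous $\phi^{0}$ that may be discontinuous or vanish on sets of positive measure --- a point the paper's proof glosses over, since evaluating $f_0((k-1)/(H-1))$ and getting KL (not just $L^1$) convergence of the Bernstein approximant implicitly requires continuity and a positive lower bound; (ii) it avoids the Taylor-expansion bookkeeping. What it costs is the continuous-mapping step, and that is the one place you should be more careful: a weak neighborhood of $H^{*}$ does \emph{not} confine the sampled mixing measure to a compact set, so a small fraction $\delta$ of its mass may sit near the boundary of the parameter space where $f_{\text{Beta}}(\cdot\mid a,b)$ is unbounded, and sup-norm closeness of the mixtures fails. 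The fix is the standard one-sided argument --- writing the sampled measure as $(1-\delta)H_{\text{good}}+\delta H_{\text{bad}}$ gives $\phi \ge (1-\delta)\phi_{\text{good}}$, so $\log(\phi^{0}/\phi)\le \log(\phi^{0}/\phi_{\text{good}})-\log(1-\delta)$ and the KL bound survives --- but as written your appeal to sup-norm convergence of the induced mixtures is not quite justified and should be replaced by this inequality.
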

The proof of the this theorem is a direct extension from the results in \citep{wu2008kullback}, and is available in supplemental material Section A. This result is the basis for the following corollary, which extends the result to the setting of the our dependent model 

\begin{corollary}\label{cor:support}Consider the joint prior $\Pi^{*}$ on $\mathcal{H}^{*} = \oplus_{k=1}^{K\times K} \mathcal{H}$ induced by the nonparametric mixture of mixtures introduced in Section \ref{sec:bnp}.
%
%
Then, for any collection $\{ \phi_{1,1}^{0}, \ldots, \phi_{K,K}^{0} \} \in \mathcal{H}^{*}$ and any $\epsilon > 0$ we have 
\begin{align*}
    \Pi^*\left( \left\{\phi_{1,1}, \dots, \phi_{K,K}: D_{KL}(\phi^0_{k,\ell}  \, || \,  \phi_{k,\ell}) < \epsilon \mbox{ for all } k, \ell = 1, \dots, K \right\}\right) > 0.
\end{align*}
\end{corollary}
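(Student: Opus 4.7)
The plan is to reduce the joint statement to a per-pair application of Theorem \ref{th:support} by exploiting that when the shared-component weight $\varepsilon$ is small, each $\phi_{k,\ell}$ is pointwise dominated from below by the density induced by its idiosyncratic component. Writing $\psi_{k,\ell}(t) = \int f_{\text{Beta}}(t \mid a, b, T_0) \dd H_{k,\ell}(a,b)$ and $\psi_0(t) = \int f_{\text{Beta}}(t \mid a, b, T_0) \dd H_0(a,b)$, we have
\begin{align*}
\phi_{k,\ell}(t) \;=\; \varepsilon \psi_0(t) + (1-\varepsilon)\psi_{k,\ell}(t) \;\geq\; (1-\varepsilon)\psi_{k,\ell}(t),
\end{align*}
so taking logarithms and integrating against $\phi^0_{k,\ell}$ yields the key bound
\begin{align*}
D_{KL}\bigl(\phi^0_{k,\ell} \, || \, \phi_{k,\ell}\bigr) \;\leq\; D_{KL}\bigl(\phi^0_{k,\ell} \, || \, \psi_{k,\ell}\bigr) + \log\frac{1}{1-\varepsilon}.
\end{align*}
Crucially, this inequality is uniform in $H_0$, so no control of the common component is required.

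Given a tolerance $\epsilon > 0$, I would then choose $\varepsilon^{*} \in (0,1)$ small enough that $\log(1/(1-\varepsilon^{*})) < \epsilon/2$. By Theorem \ref{th:support} applied with the idiosyncratic centering measure $\operatorname{Gamma}(c_a^I, d_a^I) \times \operatorname{Gamma}(c_b^I, d_b^I)$, for each $(k,\ell)$ there is a Borel set $A_{k,\ell}$ in the space of mixing measures having positive prior probability such that every $H_{k,\ell} \in A_{k,\ell}$ induces $\psi_{k,\ell}$ with $D_{KL}(\phi^0_{k,\ell} \, || \, \psi_{k,\ell}) < \epsilon/2$. By the mutual independence of $\varepsilon$, $H_0$, and $\{H_{k,\ell}\}$ assumed in Section \ref{sec:bnp}, the joint event
\begin{align*}
\{\varepsilon < \varepsilon^{*}\} \,\cap\, \bigcap_{k,\ell} \{H_{k,\ell} \in A_{k,\ell}\}
\end{align*}
has prior probability at least $\varepsilon^{*} \prod_{k,\ell} \Pi(A_{k,\ell}) > 0$. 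On this event the displayed inequality gives $D_{KL}(\phi^0_{k,\ell} \, || \, \phi_{k,\ell}) < \epsilon$ simultaneously for every pair $(k,\ell)$, which is exactly the conclusion of the corollary.

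The only nontrivial step is the KL inequality, which rests on an elementary pointwise log-bound plus Fubini; the remainder is bookkeeping with the independence structure of the prior. A minor concern is that the bound blows up as $\varepsilon \to 1$, but the uniform prior on $[0,1]$ assigns positive mass to any interval $[0,\varepsilon^{*})$ with $\varepsilon^{*} < 1$, which is all that is needed. I expect no substantive obstacle beyond writing this down carefully.
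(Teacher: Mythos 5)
Your proof is correct, but it takes a genuinely different route from the paper's. The paper's argument conditions the common component $g^0$ on lying in a KL-neighborhood $\mathcal{G}_0$ of the averaged target density $f_0^0 = K^{-1}\sum_k f_0^k$, sets $M := \max_k \sup_{g^0 \in \mathcal{G}_0} D_{KL}(f_0^k \, || \, g^0)$, and then uses convexity of the KL divergence in its second argument to get $D_{KL}(f_0^k \, || \, \delta g^0 + (1-\delta) g^k) \leq \delta M + (1-\delta)\epsilon/2 < \epsilon$ for $\delta$ small; positivity then follows from independence of the prior components. You instead discard the common component entirely via the pointwise domination $\phi_{k,\ell} \geq (1-\varepsilon)\psi_{k,\ell}$, which yields the additive bound $D_{KL}(\phi^0_{k,\ell} \, || \, \phi_{k,\ell}) \leq D_{KL}(\phi^0_{k,\ell} \, || \, \psi_{k,\ell}) + \log\bigl(1/(1-\varepsilon)\bigr)$, uniform in $H_0$. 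Your version buys something real: the paper must assert that $M < +\infty$, i.e., that the cross-KL terms $D_{KL}(f_0^k \, || \, g^0)$ are uniformly bounded over an entire KL-neighborhood of a \emph{different} density, and this finiteness is stated without justification and is not obvious; your bound requires no control of $H_0$ whatsoever, so that issue never arises. What the paper's convexity route buys in exchange is that it treats the common and idiosyncratic components symmetrically and would still work if one wanted to approximate $\phi^0_{k,\ell}$ using the common component as well (e.g., when the targets are in fact identical), whereas your argument deliberately forfeits any contribution from $H_0$. Two trivial remarks: the inequality you call ``log-bound plus Fubini'' needs only monotonicity of the integral against $\phi^0_{k,\ell}$, not Fubini; and both arguments ultimately rest on the same fact that the uniform prior on $\varepsilon$ assigns positive mass to a neighborhood of zero.
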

The proof of the this theorem is also available in supplemental material Section A.

\section{Computation}\label{se:computation}

This Section describes two alternative computational strategies for inference on the posterior distribution associated with the model described in Section \ref{sec:model}.  The first strategy is based on a Markov chain Monte Carlo (MCMC) algorithm that heavily relies on Gibbs sampling steps.  This algorithm is relatively simple to implement and can provide accurate estimates of most posterior quantities of interest, but it is comparatively slow and impractical for large datasets.  The second strategy we discuss is based on a stochastic gradient variational approximation to the posterior distribution.  The resulting algorithm is orders of magnitudes faster than MCMC and tends to yield reasonably accurate point estimates, but it also tends to underestimates the uncertainty associated with the posterior distribution.

Both of the computational strategies we discuss rely on a common set of approximations and latent variable augmentations, which we describe before getting into the details of each of them.  Firstly, both algorithms rely on (an approximation to) the full data likelihood in \eqref{eq:complik}.  Augmenting the model with the (latent) matrix $\mathbf{B}$ is what enables us to treat the problem of estimating the excitation function as one of density estimation, which in turn serves as an additional motivation for the use of mixture models in Section \ref{sec:model}.  The main challenge with this approach is that \eqref{eq:complik} is not fully tractable.  In particular, note that \eqref{eq:complik} includes a term of the form
\begin{align}
\Upsilon =	\sum_{\ell=1}^{K} \int_{0}^T \lambda_{\ell}(s) d s =\sum_{\ell=1}^K\mu_\ell T +  \sum_{k=1}^K\sum_{\ell=1}^K  \alpha_{k,\ell}\sum_{d_i = k} \tilde{\Phi}_{k,\ell}(T - t_i) .
\end{align}
The term $\int_0^T \lambda_{\ell}(s) d s$ is often called the \textit{compensator} for the conditional density function $\lambda_{\ell}(t)$, and it captures the likelihood of there being infinitely many none-events between observations on the temporal domain $[0,T_0]$. The complex structure associated with the compensator can a major hurdle in the development of computational algorithms for Hawkes process models (e.g., see  \citealp{jiang2024improvements}).  A common solution, which we adopt in this paper, is the approximation suggested in \cite{lewis2011nonparametric}:
\begin{align}\label{eq:approxcomp}
\alpha_{k,\ell}\tilde{\Phi}_{k,\ell}(T - t_i) \approx \alpha_{k,\ell}.    
\end{align}
This approximation is particularly appealing in our context because we assume that $\phi_{k,\ell}(\cdot)$ has bounded support over $[0,T_0]$.  Therefore, the approximation will only affect the likelihood evaluation for a relative small portion of the whole observations. 

Secondly, for the purposes of computational implementation, we consider a finite truncation approximation on the number of component in our model that relies on the ideas of \citep{Ishwaran2002}.  More specifically, we consider the finite mixture model of the form
\begin{align*}
	\tilde{\phi}_{k,\ell}(t) &=  \varepsilon \sum_{h=1}^H p^{0}_h f_{\text{Beta}}(t \mid a^{0}_h,b^{0}_h) + (1 - \varepsilon)\sum_{h=1}^H p^{k,\ell}_h f_{\text{Beta}}(t \mid a^{k,\ell}_h,b^{k,\ell}_h), 
\end{align*}
where
\begin{align*}
    \bfp^{0} = (p^{0}_1, \dots, p^{0}_H) &\sim \text{Dirichlet} \left(\frac{\gamma}{H}, \dots, \frac{\gamma}{H} \right) , &
    \bfp^{k,l} = (p^{k,\ell}_1, \dots, p^{k,\ell}_H) &\sim \text{Dirichlet} \left(\frac{\gamma}{H}, \dots, \frac{\gamma}{H} \right) . 
\end{align*}
for all $k, \ell = 1, \dots, K$.  The truncation level $H$ is an upper bound on the number of mixture components.  As long as $H$ is chosen to be large enough, this kind of (overfitted) finite-dimensional mixture provides an excellent approximation to the its nonparametric counterpart, one that has been repeatedly exploited to design computational algorithms for non-parametric mixture models.

Finally, and as is common in the context of mixture modeling, we introduce various sets of latent allocation variables, indicating the mixture component each observation comes from. Since \eqref{eq:complik} is evaluated on inter-arrival times across dimensions, we define $$\mathcal{T}_{k,\ell} = \{(i,j) :d_i = k, d_j = \ell,  i < j\}$$ as the set of all tuples of observation indices corresponding to transitions between a potential parent in dimension $i$ and a potential offspring in dimension $j$. Note that we only evaluate the transitions indicated by the branching structure, i.e. $B_{ij} = 1$. For each tuple $(i,j) \in \mathcal{T}_{k, \ell}$, we denote $(W_{i,j},Z_{i,j})$ as a set of latent variables that jointly determine the mixture component that $t_j - t_i$ belongs to. More specifically,
$$
t_j - t_i \mid Z_{i,j}, W_{i,j} \sim \left\{\begin{array}{lc}
\text{Beta}\left(a_{Z_{i,j}}^0, b_{Z_{i,j}}^0\right) & \text{if }W_{i,j}=1 \\
\text{Beta}\left(a^{k,\ell}_{Z_{i,j}}, b^{k,\ell}_{Z_{i,j}}\right) &  \text{if }W_{i,j}=0
\end{array}.\right.
$$
Given $\boldsymbol{p}^{0}$, $\{ \boldsymbol{p}^{k,\ell} \}$ and $\epsilon$, $\{ W_{i,j}^{k, \ell} \}$ and $\{ Z_{i,j}^{k, \ell} \}$ haver a joint distribution given by:
\begin{align*}
P(Z_{i,j} = h , W_{i,j} = 1 \mid \boldsymbol{p}^{0}, \varepsilon) &= \varepsilon p_h^0,\\  
P(Z_{i,j} = h , W_{i,j} = 0 \mid \{ \boldsymbol{p}^{k,\ell} \}, \varepsilon) &= (1-\varepsilon)p_h^{k,\ell}.
\end{align*}

\subsection{Markov chain Monte Carlo algorithm}
\label{sec:poscomp}

Once the approximations and data augmentation discussed above are introduced, the posterior distribution takes the form 
\begin{multline}\label{eq:approx_posterior}
f\left( \{ \mu_k \}, \{ \alpha_{k,\ell} \},  \{ W_{i,j} \},  \{ Z_{i,j} \}, \{ p_h^{0} \}, \{ a_{h}^{0} \}, \{ b_{h}^{0} \}, \{ p_h^{k,\ell} \}, \{ a_{h}^{k,\ell} \}, \{ b_{h}^{k,\ell}\}, \varepsilon, \bfB \mid \bfX \right) \propto \\
f^{*} \left( \bfX, \bfB \mid \{ \mu_k \}, \{ \alpha_{k,\ell} \}, \{ W_{i,j} \},  \{ Z_{i,j} \}, \{ a_{h}^{0} \}, \{ b_{h}^{0} \}, \{ a_{h}^{k,\ell} \}, \{ b_{h}^{k,\ell}\}\right)  \\
f\left(\{ \mu_k \}, \{ \alpha_{k,\ell} \}, \{ W_{i,j} \},  \{ Z_{i,j} \}, \{ a_{h}^{0} \}, \{ b_{h}^{0} \}, \{ a_{h}^{k,\ell} \}, \{ b_{h}^{k,\ell}\}\right)
%
%
%
%
%
%
\end{multline}
where 
\begin{multline*}
\log f^{*} \left( \bfX, \bfB \mid  \{ \mu_k \}, \{ \alpha_{k,\ell} \}, \{ W_{i,j} \},  \{ Z_{i,j} \}, \{ a_{h}^{0} \}, \{ b_{h}^{0} \}, \{ a_{h}^{k,\ell} \}, \{ b_{h}^{k,\ell}\}\right)  = \sum_{k=1}^K \sum_{\ell=1}^K \left|O _{k,\ell}\right|\left(\log \alpha _{k,\ell}\right) \\ 
\sum_{\substack{\{ (i,j) : i < j \\ d_{i} = k, d_{j} = \ell \}}} B_{i,j}  I(Z_{i,j} = h) 
\bigg[
W_{i,j}\log f_{\text{Beta}} \left(t_j - t_i \mid a_h^{k,\ell}, b_h^{k,\ell} \right) \\
+ (1-W_{i,j})\log f_{\text{Beta}} \left(t_j - t_i \mid a_h^{0}, b_h^{0} \right) \bigg] 
-\alpha _{k,\ell} n_k 
\Bigg] ,
\end{multline*}
with $n_k = \sum_{i = 1}^N I(d_i = k)$, is the approximate log-likelihood based on \eqref{eq:approxcomp}, and the joint prior introduced above defines $f\left(\{ \mu_k \}, \{ \alpha_{k,\ell} \}, \{ W_{i,j} \},  \{ Z_{i,j} \}, \{ a_{h}^{0} \}, \{ b_{h}^{0} \}, \{ a_{h}^{k,\ell} \}, \{ b_{h}^{k,\ell}\}\right)$ .

Many of the full conditional distributions associated with \eqref{eq:approx_posterior} belong to known families of distributions and are therefore easy to sample from.  For example, the full conditional posteriors for each of the $\mu_k$s and $\alpha_{k,\ell}$s correspond to independent Gamma distributions.  Similarly the branching structure $\bfB$ and the indicators $\{ Z_{i,j} \}$ and $\{ W_{i,j} \}$ are all categorical variables and sampling from their full conditional distributions is straightforward.  Finally, the full conditional posterior distribution for $\bfp^{0}$ and each of the $\bfp^{k,l}$s correspond to Dirichlet distributions on the $H$ dimensional simplex, and the full conditional for $\varepsilon$ follows an updated Beta distribution.  The key exception are the parameters $\{ a_{h}^{0} \}$, $\{ b_{h}^{0} \}$, $\{ a_{h}^{k,\ell} \}$ and $\{ b_{h}^{k,\ell}\}$.   We propose to update these though a random walk Metropolis-within-Gibbs steps on their log scales.  Details of the algorithm can be seen in supplemental material Section B.

\subsection{Stochastic variational inference}
\label{sec:svi}

In variational inference, the intractable posterior distribution $f(\bftheta \mid \bfX)$ is replaced with a tractable approximation $q_{\hat{\bfeta}}(\bftheta)$ obtained by minimizing the Kullback–Leibler divergence between it and  a member of a carefully chosen parametric family $\{ q_{\bfeta}(\bftheta) : \bfeta \in H\}$, i.e., 
$$
\tilde{\boldsymbol{\eta}}=\arg \max _{\boldsymbol{\eta} \in H} \mathrm{E}_{q_{\boldsymbol{\eta}}} \log \left\{\frac{p(\boldsymbol{\theta}, \mathbf{X})}{q_{\boldsymbol{\eta}}(\boldsymbol{\theta})}\right\} := \arg \max _{\boldsymbol{\eta} \in H} \operatorname{ELBO}_{\boldsymbol{\eta}}.
$$

In the sequel, we work with a mean-field variational approximation that assumes independence across all parameters, except for the $(W_{i,j}, Z_{i,j})$ pairs,
\begin{multline}\label{eq:mf}
   q_{\boldsymbol{\eta}}\left( \{ \mu_k \}, \{ \alpha_{k,\ell} \},  \{ W_{i,j} \},  \{ Z_{i,j} \}, \{ p_h^{0} \}, \{ a_{h}^{0} \}, \{ b_{h}^{0} \}, \{ p_h^{k,\ell} \}, \{ a_{h}^{k,\ell} \}, \{ b_{h}^{k,\ell}\}, \varepsilon, \bfB \mid \bfX \right) = \\
   \prod_{\ell=1}^K q_{\eta_{\mu_\ell}}(\mu_\ell)\prod_{k=1}^K\prod_{\ell=1}^K q_{\eta_{\alpha_{k,\ell}}}(\alpha_{k,\ell}) \prod_{h=1}^{L_0} q_{\eta_{a_h^0}}(a_h^0) q_{\eta_{b_h^0}} (b_h^0) q_{\eta_{p_h^0}}(p_h^0) \prod_{k=1}^K\prod_{\ell=1}^K \prod_{h=1}^{L} q_{\eta_{a_h^{k,\ell}}}(a_h^{k,\ell}) q_{\eta_{b_h^{k,\ell}}} (b_h^{k,\ell}) q_{\eta_{p_h^{k,\ell}}}(p_h^{k,\ell}) \\ 
   \prod_{j=1}^N \prod_{i=1}^{j} q_{\eta_{B_{ij}}} \left(B_{ij}\right)
   \prod_{k=1}^{K} \prod_{\ell=1}^{K} \prod_{(i,j) \in \mathcal{T}_{k,l}}
   q_{\eta_{W_{i,j}}, \eta_{Z_{i,j}}}\left(W_{i,j}, Z_{i,j} \right)
   q_{\eta_{\varepsilon}}(\varepsilon) .
\end{multline}
The families to which each of the individual terms in the variational approximation belong match those of the corresponding priors, facilitating computation (e.g., see \citealp{Blei06}).  Indeed, similarly to the MCMC algorithm, the conditional conjugacy of many of the priors and the choice of the approximation family means that most of the parameters can be easily optimized using an iterative coordinate descent algorithm that alternates updates of each of the parameters.  Again, the one exception are the parameters $\{ a_{h}^{0} \}$, $\{ b_{h}^{0} \}$, $\{ a_{h}^{k,\ell} \}$ and $\{ b_{h}^{k,\ell}\}$.  To update them, we rely on the approximate lower bound for the ELBO 
that was introduced in \cite{ma2011bayesian}. More specifically, we approximate $\mathrm{E}_{a, b}\left[\log \frac{\Gamma(a+b)}{\Gamma(a) \Gamma(b)}\right]$ using a  Taylor's expansion:  
\begin{equation}
\begin{aligned}
\label{eq:elbolb}
\operatorname{E}_{a,b} \left[\log \frac{\Gamma\left(a+b\right)}{\Gamma\left(a\right) \Gamma\left(b\right)} \right] &\geq \log \frac{\Gamma\left(\bar{a}+\bar{b}\right)}{\Gamma\left(\bar{a}\right) \Gamma\left(\bar{b}\right)} + \bar{a}[\psi(\bar{a}+\bar{b}) - \psi(\bar{a})](\operatorname{E}[\log a] - \log \bar{a}) \\&+\bar{b}[\psi(\bar{a}+\bar{b})-\psi(\bar{b})](\operatorname{E}[\log b]-\log \bar{b})\\
&+ \frac{1}{2}\bar{a}^2\left[\psi^{\prime}(\bar{a}+\bar{b})-\psi^{\prime}(\bar{a})\right] \operatorname{E}\left[(\log a-\log \bar{a})^2\right] \\&+ \frac{1}{2}\bar{b}^2\left[\psi^{\prime}(\bar{a}+\bar{b})-\psi^{\prime}(\bar{b})\right] \operatorname{E}\left[(\log b-\log \bar{b})^2\right] \\
&+\bar{a} \cdot \bar{b} \cdot \psi^{\prime}(\bar{a}+\bar{b})(\operatorname{E}[\log a]-\log \bar{b})(\mathbf{E}[\log b]-\log \bar{b}) .
\end{aligned}
\end{equation}
where $\psi(z)=\frac{\mathrm{d}}{\mathrm{~d} z} \ln \Gamma(z)=\frac{\Gamma^{\prime}(z)}{\Gamma(z)}$ denotes the Digamma function \citep{abramowitz1964handbook}, and $\bar{a}, \bar{b}$ are the expectations of $a$ and $b$ under the variational Gamma distribution, e.g. $\bar{a} = \frac{\eta_{a,1}}{\eta_{a,2}}$ for $a \sim \text{Gamma}(\eta_{a,1}, \eta_{a,2})$.  The use of this lower bound leads to simple, closed form updates for $\bfeta_{a^{0}}$s, $\bfeta_{b^{0}}$s, $\bfeta_{a^{k,\ell}}$s and $\bfeta_{b^{k,\ell}}$s.

Finally, to further speed up computation, we implement a stochastic gradient version of the conjugate descent algorithm that closely follows the ideas of \cite{Hoffman2013} (see \citealp{jiang2024improvements} as well).  At each iteration, the algorithm updates the allocation and branching variables $W_{i,j}$, $Z_{i,j}$ and $B_{i,j}$ for the observations contained in a randomly chosen segment of length $\kappa$ within the interval $[0,T]$ before updating the remainder of the model parameters.  The learning rate we use for this update is given by $\rho_s = \rho_0 (s + \tau_1)^{\tau_2}$, where $\rho_0$ is a common scaling factor, $\tau_1 \geq 0$ is the delay parameter that `slows down' early iterations and $\tau_2 \in (0.5,1]$ is the forgetting rate that controls the rate of the exponential decay. Further details of this algorithm can be seen in supplemental material Section C.

\section{Simulation Studies}
\label{se:sim}

\paragraph{Data generating mechanisms.} 

The data is generated from a multivariate Hawkes process with $K = 2$ dimensions. All scenarios we consider share the following parameter values: $\boldsymbol{\alpha} = \begin{bmatrix}
        0.6 & 0.15 \\ 0.3 & 0.6
    \end{bmatrix}$, 
    $\boldsymbol{\mu} = \begin{bmatrix}
        0.05 \\ 0.1
    \end{bmatrix}$ and $T = 15000$.  For the triggering kernels, we consider two scenarios based on whether the Beta mixture component is correctly specified. First, we consider scenarios where the excitation functions kernels are correctly-specified as mixtures of Beta distributions: 
\begin{align*}
    \tilde{\phi}^{\text{true}}_{k,\ell}(t) = \varepsilon_{\text{true}} \text{Beta}(t \mid a_{\text{true}}^0, b_{\text{true}}^0, T_0) + (1-\varepsilon_{\text{true}}) \text{Beta}(t \mid a_{\text{true}}^{k,\ell}, b_{\text{true}}^{k,\ell}, T_0) 
\end{align*}
where 
\begin{align*}
    a_{\text{true}}^0 &= 1, & b_{\text{true}}^0 &= 4, & \begin{bmatrix}
        a_{\text{true}}^{11} & a_{\text{true}}^{12} \\ a_{\text{true}}^{21} & a_{\text{true}}^{22}
    \end{bmatrix} &= \begin{bmatrix}
        2 & 4 \\ 1.5 & 1
    \end{bmatrix}, & \begin{bmatrix}
        b_{\text{true}}^{11} & b_{\text{true}}^{12} \\ b_{\text{true}}^{21} & b_{\text{true}}^{22}
    \end{bmatrix} &= \begin{bmatrix}
        6 & 1 \\ 5 & 1
    \end{bmatrix}, & T_0 = 1.
\end{align*}
We also consider a misspecified model with exponential excitation functions : 
\begin{align*}
    \tilde{\phi}^{\text{true}}_{k,\ell}(t) = \varepsilon_{\text{true}} \exp (-t)+ (1-\varepsilon_{\text{true}})  \exp (-\lambda_{\text{true}}^{j,k} t), 
\end{align*}
where $\begin{bmatrix}
    \lambda_{\text{true}}^{11} & \lambda_{\text{true}}^{12} \\ \lambda_{\text{true}}^{21} & \lambda_{\text{true}}^{22} 
\end{bmatrix} = \begin{bmatrix}
    2 & 0.8 \\ 0.8 & 2
\end{bmatrix}$.  Additionally, for each of these two scenarios, we consider five values of $\varepsilon_{\text{true}}$, $\{ 0, 0.2, 0.5, 0.8, 1\}$.

\paragraph{Benchmark methods.} We fit our model using the two computational approaches discussed in Section \ref{se:computation}.  We also compare our method where $\varepsilon$ is treated as an unknown parameter (`RANDOM'), to two benchmark versions, one where there is no information borrowing (`IDIO', which corresponds to fixed $\varepsilon = 0$) or the triggering kernels are identical (`COMMON', which corresponds to fixed $\varepsilon = 1$). Each of IDIO and COMMON are in turn fitted using versions of both the MCMC and the SGVI algorithms discussed in Section \ref{se:computation}. 
Finally, we consider a frequentist benchmark method based on piecewise basis kernels using the EM algorithm (EM-BK, see \citealp{zhou2013learning}).

\paragraph{Performance metrics.}  
We use the root mean integrated squared error (RMISE) as a metric for point estimation accuracy:
\begin{align*}
    \text{RMISE}(\boldsymbol{\phi}) &= \frac{1}{K^2} \sum_{k=1}^K \sum_{\ell=1}^K \sqrt{\int_0^{+\infty}\left(\phi_{k, \ell}^{\text {true }}(x)-\hat{\phi}_{k, \ell}(x)\right)^2 \mathrm{~d} x} .
\end{align*}
This is just the $L_2$ distance between the excitation functions and its estimate (which, for Bayesian procedures, corresponds to the posterior mean), averaged over all dimension pairs.  In practice, we approximate the integral involved in the definition of $\text{RMISE}(\boldsymbol{\phi})$ by averaging the value of the function over a fine grid. 

On the other hand, we use the average coverage rate (ACR) and the interval score (IS) to evaluate uncertainty estimation. ACR is defined as the proportion of correct coverages of the triggering kernels evaluated on the grid points, averaged over all the dimension pairs. IS \citep{gneiting2007strictly} is a generalization of ACR that penalizes wider interval lengths and low coverage rates.

\paragraph{Results.} Tables \ref{tab:pe-rmise} through \ref{tab:ue-is} show the RMISE, coverage rates and interval scores across five different values of $\varepsilon_{\text{true}}$ under the correctly-specified data generating mechanism for the various approaches under consideration. Similarly, Tables \ref{tab:pe-rmise2} through \ref{tab:ue-is2} (shown in supplemental materials Section D) show the results under the mis-specified data generating mechanism. For the MCMC algorithm under the correctly-specified scenario, `IDIO' dominates the other two methods when $(\varepsilon_{\text{true}} = 0)$, as it has the lowest RMISE. Likewise, `COMMON' dominates under the scenario where $(\varepsilon_{\text{true}} = 1)$.  This is not surprising, as in both cases the best performing model matches the true model.  However, for scenarios where the true data is a mixture of a common and an idiosyncratic components, our method (`RANDOM') shows the lowest RMISE. The ranking of the models under stochastic variational inference is similar, but we see slightly higher RIMSE values across the board. Finally, RMISE values for EM-BK are much higher than those for the MCMC algorithm in all scenarios. The `IDIO' method for MCMC is closer to the nominal coverages in most scenarios, but have slightly lower interval scores due to having wider intervals. The SVI algorithm tends to underestimate parameter uncertainty, causing much lower coverages and higher interval scores. The results for the mis-specified case are similar.

\begin{table}[]
\centering
\begin{tabular}{cccccccc}
\hline
 &
  \multicolumn{3}{c}{MCMC} &
  \multicolumn{3}{c}{SVI} &
  EM-BK \\ 
 $\varepsilon_{\text{true}}$ &
  RANDOM &
  IDIO &
  COMMON &
  RANDOM &
  IDIO &
  COMMON &
  - \\ \hline
$0$ &
  \begin{tabular}[c]{@{}c@{}}0.097\\ (0.016)\end{tabular} &
  \begin{tabular}[c]{@{}c@{}}0.096\\ (0.021)\end{tabular} &
  \begin{tabular}[c]{@{}c@{}}0.775\\ (0.001)\end{tabular} &
  \begin{tabular}[c]{@{}c@{}}0.279\\ (0.032)\end{tabular} &
  \begin{tabular}[c]{@{}c@{}}0.182\\ (0.129)\end{tabular} &
  \begin{tabular}[c]{@{}c@{}}0.859\\ (0.011)\end{tabular} &
  \begin{tabular}[c]{@{}c@{}}0.828\\ (0.297)\end{tabular} \\
$0.2$ &
  \begin{tabular}[c]{@{}c@{}}0.102\\ (0.014)\end{tabular} &
  \begin{tabular}[c]{@{}c@{}}0.102\\ (0.012)\end{tabular} &
  \begin{tabular}[c]{@{}c@{}}0.621\\ (0.002)\end{tabular} &
  \begin{tabular}[c]{@{}c@{}}0.225\\ (0.034)\end{tabular} &
  \begin{tabular}[c]{@{}c@{}}0.243\\ (0.082)\end{tabular} &
  \begin{tabular}[c]{@{}c@{}}0.723\\ (0.020)\end{tabular} &
  \begin{tabular}[c]{@{}c@{}}0.505\\ (0.207)\end{tabular} \\
$0.5$ &
  \begin{tabular}[c]{@{}c@{}}0.102\\ (0.010)\end{tabular} &
  \begin{tabular}[c]{@{}c@{}}0.109\\ (0.010)\end{tabular} &
  \begin{tabular}[c]{@{}c@{}}0.391\\ (0.001)\end{tabular} &
  \begin{tabular}[c]{@{}c@{}}0.216\\ (0.042)\end{tabular} &
  \begin{tabular}[c]{@{}c@{}}0.241\\ (0.081)\end{tabular} &
  \begin{tabular}[c]{@{}c@{}}0.554\\ (0.012)\end{tabular} &
  \begin{tabular}[c]{@{}c@{}}0.200\\ (0.076)\end{tabular} \\
$0.8$ &
  \begin{tabular}[c]{@{}c@{}}0.080\\ (0.007)\end{tabular} &
  \begin{tabular}[c]{@{}c@{}}0.101\\ (0.009)\end{tabular} &
  \begin{tabular}[c]{@{}c@{}}0.163\\ (0.002)\end{tabular} &
  \begin{tabular}[c]{@{}c@{}}0.296\\ (0.012)\end{tabular} &
  \begin{tabular}[c]{@{}c@{}}0.329\\ (0.139)\end{tabular} &
  \begin{tabular}[c]{@{}c@{}}0.399\\ (0.144)\end{tabular} &
  \begin{tabular}[c]{@{}c@{}}0.126\\ (0.014)\end{tabular} \\
$1$ &
  \begin{tabular}[c]{@{}c@{}}0.051\\ (0.012)\end{tabular} &
  \begin{tabular}[c]{@{}c@{}}0.010\\ (0.010)\end{tabular} &
  \begin{tabular}[c]{@{}c@{}}0.047\\ (0.011)\end{tabular} &
  \begin{tabular}[c]{@{}c@{}}0.433\\ (0.009)\end{tabular} &
  \begin{tabular}[c]{@{}c@{}}0.406\\ (0.143)\end{tabular} &
  \begin{tabular}[c]{@{}c@{}}0.342\\ (0.219)\end{tabular} &
  \begin{tabular}[c]{@{}c@{}}0.139\\ (0.022)\end{tabular} \\ \hline
\end{tabular}
\caption{RMISE as a point estimation accuracy metric for all methods under five true information-borrowing ratios. The values in the grid cells are the average over 10 independently generated datasets, and the standard deviation is shown in the brackets.}
\label{tab:pe-rmise}
\end{table}

\begin{table}[]
\centering
\begin{tabular}{ccccccc}
\hline
 &
  \multicolumn{3}{c}{MCMC} &
  \multicolumn{3}{c}{SVI} \\
 $\varepsilon_{\text{true}}$ & 
  RANDOM &
  IDIO &
  COMMON &
  RANDOM &
  IDIO &
  COMMON \\ \hline
$0$ &
  \begin{tabular}[c]{@{}c@{}}0.728\\ (0.096)\end{tabular} &
  \begin{tabular}[c]{@{}c@{}}0.776\\ (0.089)\end{tabular} &
  \begin{tabular}[c]{@{}c@{}}0.035\\ (0.003)\end{tabular} &
  \begin{tabular}[c]{@{}c@{}}0.138\\ (0.106)\end{tabular} &
  \begin{tabular}[c]{@{}c@{}}0.351\\ (0.189)\end{tabular} &
  \begin{tabular}[c]{@{}c@{}}0.013\\ (0.003)\end{tabular} \\
$0.2$ &
  \begin{tabular}[c]{@{}c@{}}0.796\\ (0.089)\end{tabular} &
  \begin{tabular}[c]{@{}c@{}}0.843\\ (0.116)\end{tabular} &
  \begin{tabular}[c]{@{}c@{}}0.045\\ (0.005)\end{tabular} &
  \begin{tabular}[c]{@{}c@{}}0.059\\ (0.017)\end{tabular} &
  \begin{tabular}[c]{@{}c@{}}0.263\\ (0.084)\end{tabular} &
  \begin{tabular}[c]{@{}c@{}}0.018\\ (0.004)\end{tabular} \\
$0.5$ &
  \begin{tabular}[c]{@{}c@{}}0.822\\ (0.096)\end{tabular} &
  \begin{tabular}[c]{@{}c@{}}0.902\\ (0.079)\end{tabular} &
  \begin{tabular}[c]{@{}c@{}}0.071\\ (0.008)\end{tabular} &
  \begin{tabular}[c]{@{}c@{}}0.048\\ (0.022)\end{tabular} &
  \begin{tabular}[c]{@{}c@{}}0.244\\ (0.101)\end{tabular} &
  \begin{tabular}[c]{@{}c@{}}0.029\\ (0.009)\end{tabular} \\
$0.8$ &
  \begin{tabular}[c]{@{}c@{}}0.829\\ (0.049)\end{tabular} &
  \begin{tabular}[c]{@{}c@{}}0.870\\ (0.046)\end{tabular} &
  \begin{tabular}[c]{@{}c@{}}0.176\\ (0.037)\end{tabular} &
  \begin{tabular}[c]{@{}c@{}}0.052\\ (0.009)\end{tabular} &
  \begin{tabular}[c]{@{}c@{}}0.244\\ (0.148)\end{tabular} &
  \begin{tabular}[c]{@{}c@{}}0.036\\ (0.017)\end{tabular} \\
$1$ &
  \begin{tabular}[c]{@{}c@{}}0.801\\ (0.104)\end{tabular} &
  \begin{tabular}[c]{@{}c@{}}0.896\\ (0.100)\end{tabular} &
  \begin{tabular}[c]{@{}c@{}}0.801\\ (0.150)\end{tabular} &
  \begin{tabular}[c]{@{}c@{}}0.450\\ (0.319)\end{tabular} &
  \begin{tabular}[c]{@{}c@{}}0.193\\ (0.171)\end{tabular} &
  \begin{tabular}[c]{@{}c@{}}0.190\\ (0.347)\end{tabular} \\ \hline
\end{tabular}
\caption{Coverage rate as an uncertainty estimation accuracy metric for all methods under five true information-borrowing ratios. The values in the grid cells are the average over 10 independently generated datasets, and the standard deviation is shown in the brackets.}
\label{tab:ue-acr}
\end{table}

\begin{table}[]
\centering
\begin{tabular}{ccccccc}
\hline
 &
  \multicolumn{3}{c}{MCMC} &
  \multicolumn{3}{c}{SVI} \\
 $\varepsilon_{\text{true}}$
 &
  RANDOM &
  IDIO &
  COMMON &
  RANDOM &
  IDIO &
  COMMON \\ \hline
$0$ &
  \begin{tabular}[c]{@{}c@{}}0.014\\ (0.005)\end{tabular} &
  \begin{tabular}[c]{@{}c@{}}0.014\\ (0.008)\end{tabular} &
  \begin{tabular}[c]{@{}c@{}}0.629\\ (0.003)\end{tabular} &
  \begin{tabular}[c]{@{}c@{}}0.255\\ (0.002)\end{tabular} &
  \begin{tabular}[c]{@{}c@{}}0.118\\ (0.098)\end{tabular} &
  \begin{tabular}[c]{@{}c@{}}0.719\\ (0.005)\end{tabular} \\
$0.2$ &
  \begin{tabular}[c]{@{}c@{}}0.014\\ (0.005)\end{tabular} &
  \begin{tabular}[c]{@{}c@{}}0.015\\ (0.005)\end{tabular} &
  \begin{tabular}[c]{@{}c@{}}0.492\\ (0.005)\end{tabular} &
  \begin{tabular}[c]{@{}c@{}}0.293\\ (0.030)\end{tabular} &
  \begin{tabular}[c]{@{}c@{}}0.132\\ (0.066)\end{tabular} &
  \begin{tabular}[c]{@{}c@{}}0.600\\ (0.015)\end{tabular} \\
$0.5$ &
  \begin{tabular}[c]{@{}c@{}}0.014\\ (0.005)\end{tabular} &
  \begin{tabular}[c]{@{}c@{}}0.014\\ (0.004)\end{tabular} &
  \begin{tabular}[c]{@{}c@{}}0.295\\ (0.005)\end{tabular} &
  \begin{tabular}[c]{@{}c@{}}0.388\\ (0.044)\end{tabular} &
  \begin{tabular}[c]{@{}c@{}}0.132\\ (0.050)\end{tabular} &
  \begin{tabular}[c]{@{}c@{}}0.448\\ (0.011)\end{tabular} \\
$0.8$ &
  \begin{tabular}[c]{@{}c@{}}0.011\\ (0.002)\end{tabular} &
  \begin{tabular}[c]{@{}c@{}}0.012\\ (0.002)\end{tabular} &
  \begin{tabular}[c]{@{}c@{}}0.100\\ (0.002)\end{tabular} &
  \begin{tabular}[c]{@{}c@{}}0.150\\ (0.009)\end{tabular} &
  \begin{tabular}[c]{@{}c@{}}0.188\\ (0.088)\end{tabular} &
  \begin{tabular}[c]{@{}c@{}}0.331\\ (0.128)\end{tabular} \\
$1$ &
  \begin{tabular}[c]{@{}c@{}}0.006\\ (0.003)\end{tabular} &
  \begin{tabular}[c]{@{}c@{}}0.011\\ (0.004)\end{tabular} &
  \begin{tabular}[c]{@{}c@{}}0.005\\ (0.002)\end{tabular} &
  \begin{tabular}[c]{@{}c@{}}0.008\\ (0.005)\end{tabular} &
  \begin{tabular}[c]{@{}c@{}}0.218\\ (0.067)\end{tabular} &
  \begin{tabular}[c]{@{}c@{}}0.276\\ (0.188)\end{tabular} \\ \hline
\end{tabular}
\caption{Interval score as an uncertainty estimation accuracy metric for all methods under five true information-borrowing ratios. The values in the grid cells are the average over 10 independently generated datasets, and the standard deviation is shown in the brackets.}
\label{tab:ue-is}
\end{table}

\section{Application: Modeling order flow in financial markets}\label{se:illustration}


{In this Section, we apply our method to study the order flow in Amazon's limit order book (LOB). We start by giving a brief introduction of the LOB data structure (we refer the readers to \citealp{gould2013limit} for a detailed description of LOB data).} LOB records the placement of limit orders from both buyers and sellers, with the submission time, proposed price and volume. 
Once a limit order is placed, the order-matching algorithm of the platform attempts to match it with a pre-existing order of the other trade direction. A successful matching is called \textit{market order}. If the placed order does not match with any orders, it remains an \textit{active order} and is recorded by LOB until it is matched or \textit{cancelled}. Within the scope of this paper, we will look at the arrival of three event types: placement of active limit orders (submissions), market orders, and order cancellations. 

Our LOB data comes from LOBSTER (Limit Order Book System - The Efficient Reconstructor, \citep{huang2011lobster}), which is based on the official NASDAQ Historical TotalView-ITCH sample. The dataset includes the order book for events between 9:30:00 AM and 16:00:00 PM on June 21, 2012. We focus on modeling the order flow of level 1 data, that is, the order with the best \textit{bid}(\textit{ask}) prices, defined as the highest (lowest) price at which there is an active limit order. We exclude orders with trading volume lower than 100. The dataset includes 30411 order events with the timestamps (with nanosecond decimal precision). We group the events into four dimensions: buy submissions, buy market orders/cancellations, sell submissions, and sell market orders/cancellations, each containing 8409, 6791, 8801 and 6410 events, respectively. The flow of order events can be seen as a four-dimensional counting process, which can be modelled using an MHP.

To motivate the use of nonparametric estimators for the excitation functions of MHPs in this context, we present in Figure \ref{fig:motiv} the density plots of the cross-dimensional interarrival times among all four dimensions. Note that we excluded the interarrival times over 1 second as events are very unlikely to trigger events in the far future. There are two major implications of the figure. First, the density plots for interarrival times demonstrate various degrees of heterogeneity across dimensions, while having a similar shape. Another discovery is that the majority of the plots are multimodal, with a small `bump' roughly around the 0.5 second mark; one hypothesis is that this bump is the result of automated algorithms responding to market events.

We fit our model to the dataset using both MCMC and SVI methods. For both methods, we let $T_0 = 1$. We collected 10,000 posterior samples from the MCMC algorithms, with the first 5,000 discarded as burn-in. To prevent convergence to  sub-optimal modes, we fit the model to the same dataset under 50 different starting values with different random seeds, and keep the instance with the highest posterior mean marginal likelihood. Figure \ref{fig:app-mcmc} shows the posterior mean for the pointwise triggering kernels. The graphs show different decay rates across dimensions. Remarkably, there are two small bumps for transitions: from `buy submissions' to `buy market order/cancellations' and from `sell submissions' to `sell market order/cancellations'. Similarly, we run the SVI algorithm to fit the dataset with 100 different instances and choose the one with the highest ELBO to evaluate the parameter estimation results. We generated 5,000 samples from the variational distribution obtained from the converged model. Figure \ref{fig:app-svi} (shown in supplemental material section E) shows the variational mean for the pointwise triggering kernels. Similar to Figure \ref{fig:app-mcmc}, the triggering kernels decay differently, and there is a clear bump from `sell submissions' to `sell market order/cancellations'.

We also compared the estimation results for the $\boldsymbol{\alpha}$ matrix across both methods. Figure \ref{fig:alpha_heat} shows the heatmap for the point estimates (upper panel), $95\%$ credible interval lengths (lower panel) and Figure \ref{fig:alpha_spec} shows histograms of the spectral radius of the $\boldsymbol{\alpha}$ matrix for the MCMC (left panel) and SVI (right panel) algorithms. The spectral radius $\rho(\boldsymbol{\alpha})$ is defined as the absolute value of the largest eigenvalue of the $\boldsymbol{\alpha}$ matrix:
\begin{align*}
    \rho(\boldsymbol{\alpha})= \max_{\lambda \in \mathcal{E}(\boldsymbol{\alpha})} |\lambda|,
\end{align*}
where $\lambda \in \mathcal{E}(\boldsymbol{\alpha})$ denotes the set of eigenvalues of $\boldsymbol{\alpha}$. One sufficient condition for the stationarity of the MHP is that $\rho(\boldsymbol{\alpha}) < 1$ \citep{Hawkes1971, abergel2015long}. Hence, from Figure \ref{fig:alpha_spec}, we see that both MCMC and SVI suggest the process we study here is stationary.  However, the posterior distribution for the spectral radius under SVI suggest a higher value, and somewhat surprisingly, also a higher variability. On the other hand, Figure \ref{fig:alpha_heat} suggests that there are stronger interactions within the `buy' or `sell' transactions. This result is in line with the empirical results in Figure \ref{fig:motiv}, where the same structure can be seen.
\begin{figure}
    \centering
    \includegraphics[width = 0.9\linewidth]{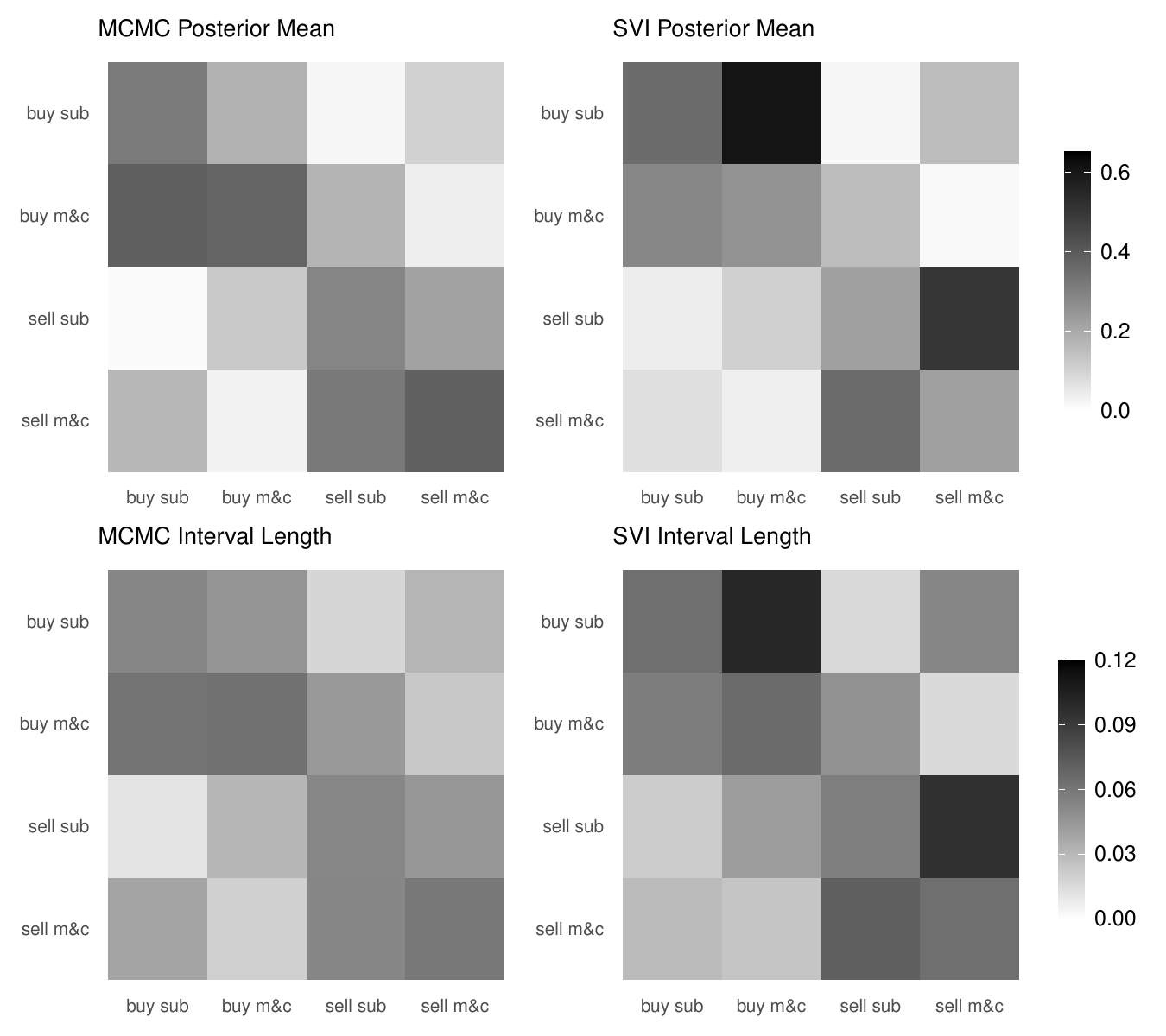}
    \caption{Estimation results for the $\boldsymbol{\alpha}$ matrix. The heatmaps show the posterior mean (first row) and average $95\%$ posterior interval lengths (second row) for the values in $\boldsymbol{\alpha}$ by MCMC (left panel) and SVI (right panel) algorithms, for buy submissions (`buy sub'), buy market order/cancellations (`buy m\&c'), sell submissions (`sell sub'), sell market order/cancellations (`sell m\&c'). y-axis shows the parent events, while x-axis shows the child events. Darker colors corresponds to higher values.}
    \label{fig:alpha_heat}
\end{figure}
\begin{figure}
    \centering
    \includegraphics[width = 0.9\linewidth]{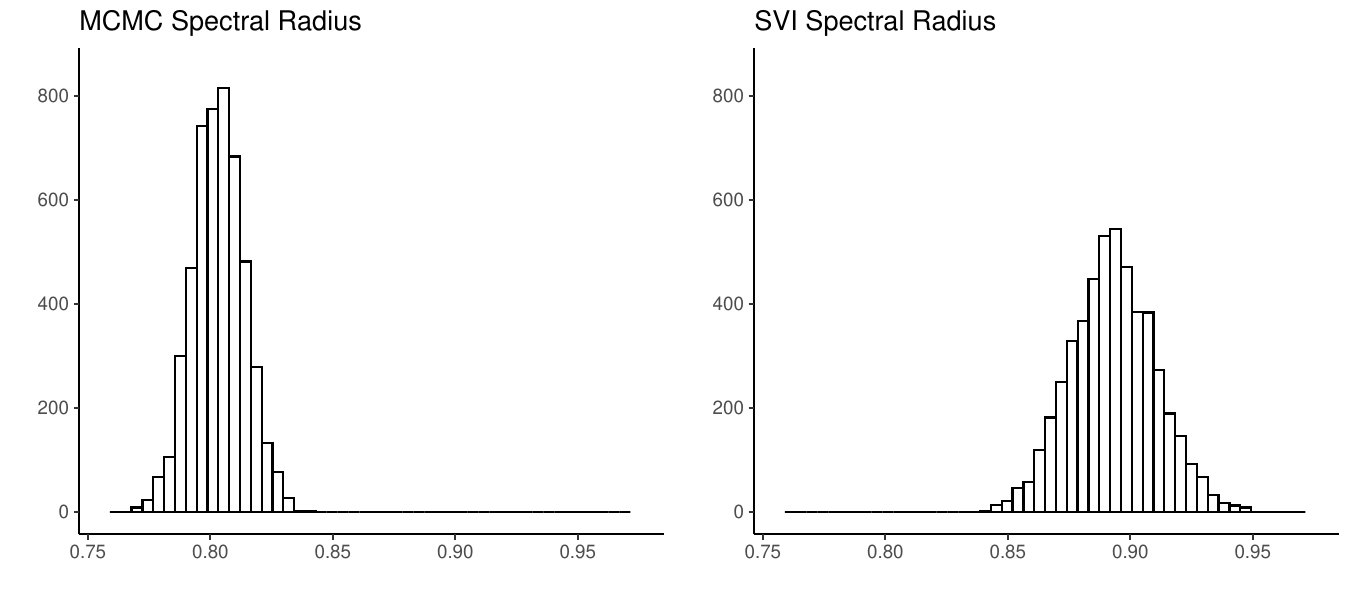}
    \caption{Histograms for the spectral radius of the $\boldsymbol{\alpha}$ matrix by the MCMC (left) and SVI (right) algorithm.}
    \label{fig:alpha_spec}
\end{figure}

\begin{figure}
    \centering
    \includegraphics[width=\linewidth]{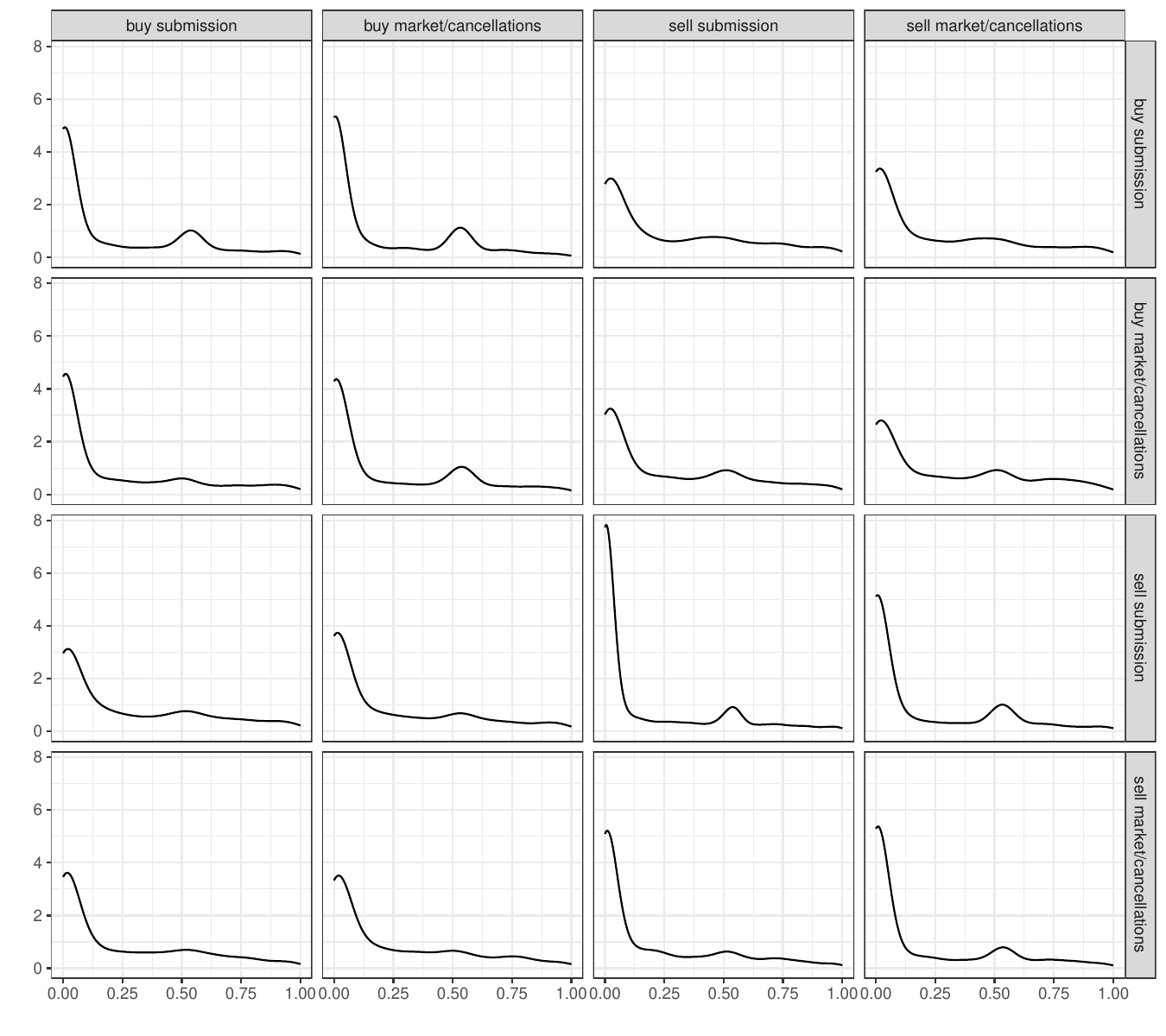}
    \caption{Empirical density plots of the cross-dimensional inter-arrival times (within one second) among the four dimensions.}
    \label{fig:motiv}
\end{figure}

\begin{figure}
    \centering
    \includegraphics[width=\linewidth]{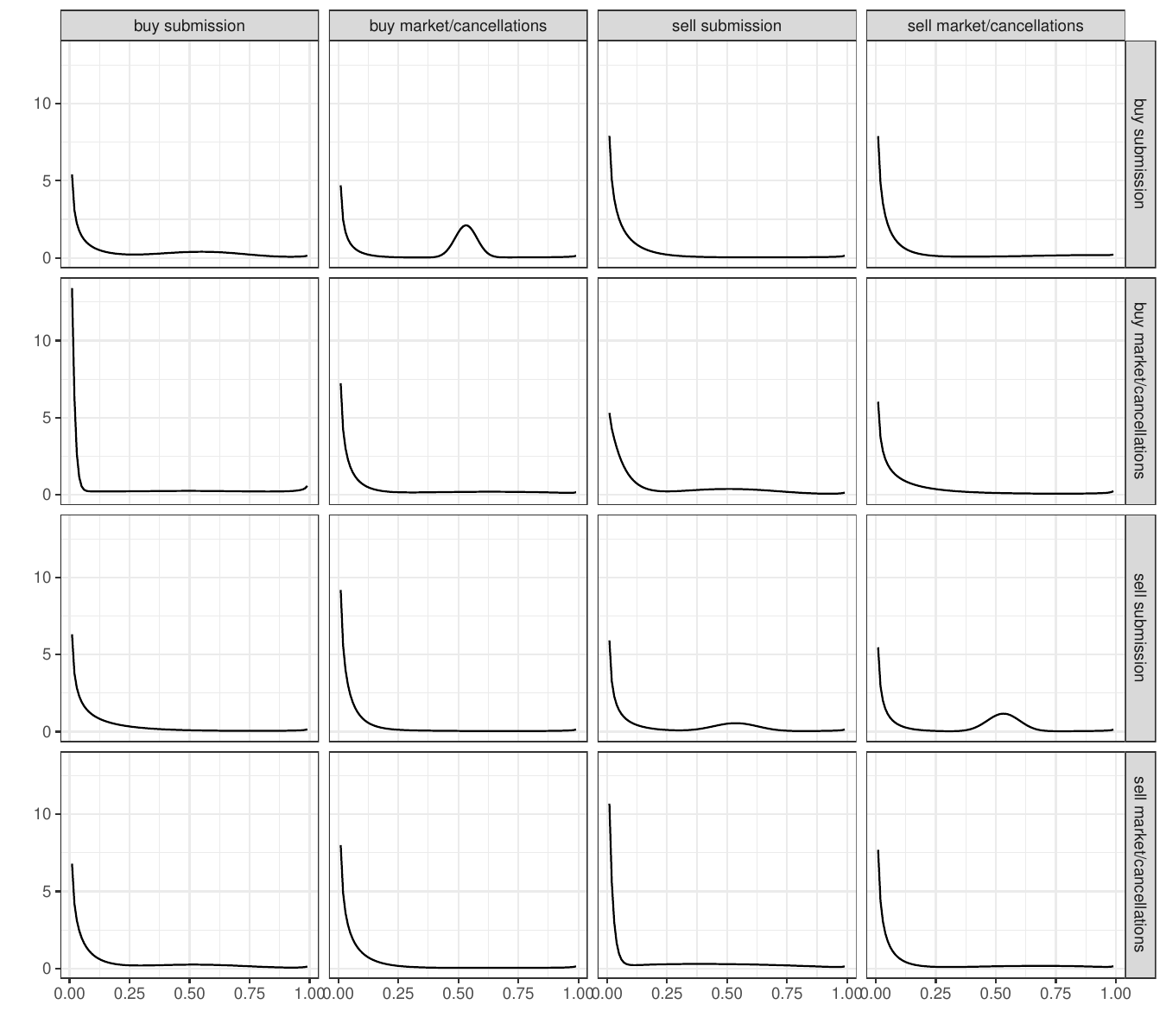}
    \caption{Estimated density plots from the MCMC algorithm of the cross-dimensional inter-arrival times (within one second) among the four dimensions.}
    \label{fig:app-mcmc}
\end{figure}


\section{Discussion}

This paper proposed a semiparametric model for the excitation functions of a multivariate Hawkes process that relies on a dependent Dirichlet Process mixture of scaled Beta functions. We showed that the model has large Kulback-Leibler support and developed two computation methods for the MHP-DDP model: a Metropolis-within-Gibbs Markov chain Monte Carlo sampler, and a stochastic variational algorithm that is scalable to large-scale datasets. 

Our simulation examples in Section \ref{se:sim} showed that our model outperforms (in both point and uncertainty estimation) the benchmark methods when the true excitation functions are mixtures of common and idiosyncratic components. This suggests that our estimation method could be particularly useful when the triggering patterns across dimensions are different, but common information could also be borrowed. This is evidenced by our application example in Section \ref{se:illustration}, where the empirical distribution of inter-arrival times for order book events demonstrate dimensional-specific heterogeneities, but also share common patterns.

There are a few directions where our model could be extended. In addition to the dependent Dirichlet prior construction we employed based on \cite{Muller2004}, it is also of interest to explore the dependent structure in hierarchical Dirichlet processes \citep{teh2004sharing} or nested Dirichlet processes \cite{rodriguez2008nested}. Secondly, the model could be extended to the case of nonlinear Hawkes processes \citep{bremaud1996stability}. The case where we model $\log \lambda_k(t)$ instead of $\lambda_k(t)$ as the sum of the background intensity and the triggering kernels is of particular interest, as past literature has demonstrated the need for modeling flexible and multi-modal triggering kernel patterns (for example, see \citealp{rambaldi2017role}). 


\bibliographystyle{Chicago}
\bibliography{Bibliography-MM-MC}

\begin{thebibliography}{}

\bibitem[\protect\citeauthoryear{Abergel and Jedidi}{Abergel and Jedidi}{2015}]{abergel2015long}
Abergel, F. and A.~Jedidi (2015).
\newblock Long-time behavior of a hawkes process--based limit order book.
\newblock {\em SIAM Journal on Financial Mathematics\/}~{\em 6\/}(1), 1026--1043.

\bibitem[\protect\citeauthoryear{Abramowitz and Stegun}{Abramowitz and Stegun}{1964}]{abramowitz1964handbook}
Abramowitz, M. and I.~A. Stegun (1964).
\newblock {\em Handbook of mathematical functions with formulas, graphs, and mathematical tables}, Volume~55.
\newblock US Government printing office.

\bibitem[\protect\citeauthoryear{Alfonsi, Fruth, and Schied}{Alfonsi et~al.}{2010}]{alfonsi2010optimal}
Alfonsi, A., A.~Fruth, and A.~Schied (2010).
\newblock Optimal execution strategies in limit order books with general shape functions.
\newblock {\em Quantitative finance\/}~{\em 10\/}(2), 143--157.

\bibitem[\protect\citeauthoryear{Bacry, Mastromatteo, and Muzy}{Bacry et~al.}{2015}]{bacry2015hawkes}
Bacry, E., I.~Mastromatteo, and J.-F. Muzy (2015).
\newblock Hawkes processes in finance.
\newblock {\em Market Microstructure and Liquidity\/}~{\em 1\/}(01), 1550005.

\bibitem[\protect\citeauthoryear{Bacry and Muzy}{Bacry and Muzy}{2016}]{bacry2016first}
Bacry, E. and J.-F. Muzy (2016).
\newblock First-and second-order statistics characterization of hawkes processes and non-parametric estimation.
\newblock {\em IEEE Transactions on Information Theory\/}~{\em 62\/}(4), 2184--2202.

\bibitem[\protect\citeauthoryear{Blei and Jordan}{Blei and Jordan}{2006}]{Blei06}
Blei, D.~M. and M.~I. Jordan (2006).
\newblock {Variational inference for {D}irichlet process mixtures}.
\newblock {\em Bayesian Analysis\/}~{\em 1\/}(1), 121 -- 143.

\bibitem[\protect\citeauthoryear{Br{\'e}maud and Massouli{\'e}}{Br{\'e}maud and Massouli{\'e}}{1996}]{bremaud1996stability}
Br{\'e}maud, P. and L.~Massouli{\'e} (1996).
\newblock Stability of nonlinear hawkes processes.
\newblock {\em The Annals of Probability\/}, 1563--1588.

\bibitem[\protect\citeauthoryear{Cont}{Cont}{2011}]{cont2011statistical}
Cont, R. (2011).
\newblock Statistical modeling of high-frequency financial data.
\newblock {\em IEEE Signal Processing Magazine\/}~{\em 28\/}(5), 16--25.

\bibitem[\protect\citeauthoryear{Cont and M\"uller}{Cont and M\"uller}{2021}]{cont2021stochastic}
Cont, R. and M.~S. M\"uller (2021).
\newblock A stochastic partial differential equation model for limit order book dynamics.
\newblock {\em SIAM Journal on Financial Mathematics\/}~{\em 12\/}(2), 744--787.

\bibitem[\protect\citeauthoryear{Daley and Vere-Jones}{Daley and Vere-Jones}{2008}]{daley2008}
Daley, D.~J. and D.~Vere-Jones (2008).
\newblock {\em An Introduction to the Theory of Point Processes. Volume II: General Theory and Structure}.
\newblock Springer.

\bibitem[\protect\citeauthoryear{Donnet, Rivoirard, and Rousseau}{Donnet et~al.}{2020}]{donnet2020nonparametric}
Donnet, S., V.~Rivoirard, and J.~Rousseau (2020).
\newblock Nonparametric bayesian estimation for multivariate hawkes processes.

\bibitem[\protect\citeauthoryear{Furrer, Genton, and Nychka}{Furrer et~al.}{2006}]{furrer2006covariance}
Furrer, R., M.~G. Genton, and D.~Nychka (2006).
\newblock Covariance tapering for interpolation of large spatial datasets.
\newblock {\em Journal of Computational and Graphical Statistics\/}~{\em 15\/}(3), 502--523.

\bibitem[\protect\citeauthoryear{Gneiting and Raftery}{Gneiting and Raftery}{2007}]{gneiting2007strictly}
Gneiting, T. and A.~E. Raftery (2007).
\newblock Strictly proper scoring rules, prediction, and estimation.
\newblock {\em Journal of the American statistical Association\/}~{\em 102\/}(477), 359--378.

\bibitem[\protect\citeauthoryear{Gould, Porter, Williams, McDonald, Fenn, and Howison}{Gould et~al.}{2013}]{gould2013limit}
Gould, M.~D., M.~A. Porter, S.~Williams, M.~McDonald, D.~J. Fenn, and S.~D. Howison (2013).
\newblock Limit order books.
\newblock {\em Quantitative Finance\/}~{\em 13\/}(11), 1709--1742.

\bibitem[\protect\citeauthoryear{Hansen, Reynaud-Bouret, and Rivoirard}{Hansen et~al.}{2015}]{hansen2015lasso}
Hansen, N.~R., P.~Reynaud-Bouret, and V.~Rivoirard (2015).
\newblock Lasso and probabilistic inequalities for multivariate point processes.

\bibitem[\protect\citeauthoryear{Hawkes}{Hawkes}{1971}]{Hawkes1971}
Hawkes, A.~G. (1971).
\newblock Spectra of some self-exciting and mutually exciting point processes.
\newblock {\em Biometrika\/}~{\em 58\/}(1), 83--90.

\bibitem[\protect\citeauthoryear{Hawkes and Oakes}{Hawkes and Oakes}{1974}]{Hawkes1974}
Hawkes, A.~G. and D.~Oakes (1974).
\newblock A cluster process representation of a self-exciting process.
\newblock {\em Journal of applied probability\/}~{\em 11\/}(3), 493--503.

\bibitem[\protect\citeauthoryear{Hoffman, Blei, Wang, and Paisley}{Hoffman et~al.}{2013}]{Hoffman2013}
Hoffman, M.~D., D.~M. Blei, C.~Wang, and J.~Paisley (2013).
\newblock Stochastic variational inference.
\newblock {\em Journal of Machine Learning Research\/}.

\bibitem[\protect\citeauthoryear{Huang and Polak}{Huang and Polak}{2011}]{huang2011lobster}
Huang, R. and T.~Polak (2011).
\newblock Lobster: Limit order book reconstruction system.
\newblock {\em Available at SSRN 1977207\/}.

\bibitem[\protect\citeauthoryear{Ishwaran and James}{Ishwaran and James}{2002}]{Ishwaran2002}
Ishwaran, H. and L.~F. James (2002).
\newblock Approximate dirichlet process computing in finite normal mixtures: smoothing and prior information.
\newblock {\em Journal of Computational and Graphical statistics\/}~{\em 11\/}(3), 508--532.

\bibitem[\protect\citeauthoryear{Jiang and Rodriguez}{Jiang and Rodriguez}{2024}]{jiang2024improvements}
Jiang, A.~Z. and A.~Rodriguez (2024).
\newblock Improvements on scalable stochastic bayesian inference methods for multivariate hawkes process.
\newblock {\em Statistics and Computing\/}~{\em 34\/}(2), 85.

\bibitem[\protect\citeauthoryear{Kaufman, Schervish, and Nychka}{Kaufman et~al.}{2008}]{kaufman2008covariance}
Kaufman, C.~G., M.~J. Schervish, and D.~W. Nychka (2008).
\newblock Covariance tapering for likelihood-based estimation in large spatial data sets.
\newblock {\em Journal of the American Statistical Association\/}~{\em 103\/}(484), 1545--1555.

\bibitem[\protect\citeauthoryear{Lewis and Mohler}{Lewis and Mohler}{2011}]{lewis2011nonparametric}
Lewis, E. and G.~Mohler (2011).
\newblock A nonparametric em algorithm for multiscale hawkes processes.
\newblock {\em Journal of nonparametric statistics\/}~{\em 1\/}(1), 1--20.

\bibitem[\protect\citeauthoryear{Liniger}{Liniger}{2009}]{liniger2009multivariate}
Liniger, T.~J. (2009).
\newblock {\em Multivariate hawkes processes}.
\newblock Ph.\ D. thesis, ETH Zurich.

\bibitem[\protect\citeauthoryear{Ma and Leijon}{Ma and Leijon}{2011}]{ma2011bayesian}
Ma, Z. and A.~Leijon (2011).
\newblock Bayesian estimation of beta mixture models with variational inference.
\newblock {\em IEEE Transactions on Pattern Analysis and Machine Intelligence\/}~{\em 33\/}(11), 2160--2173.

\bibitem[\protect\citeauthoryear{Markwick}{Markwick}{2020}]{Markwick2020}
Markwick, D. (2020).
\newblock {\em Bayesian Nonparametric Hawkes Processes with Applications}.
\newblock Ph.\ D. thesis, UCL (University College London).

\bibitem[\protect\citeauthoryear{Marsan and Lengline}{Marsan and Lengline}{2008}]{marsan2008extending}
Marsan, D. and O.~Lengline (2008).
\newblock Extending earthquakes' reach through cascading.
\newblock {\em Science\/}~{\em 319\/}(5866), 1076--1079.

\bibitem[\protect\citeauthoryear{M{\"u}ller, Quintana, and Rosner}{M{\"u}ller et~al.}{2004}]{Muller2004}
M{\"u}ller, P., F.~Quintana, and G.~Rosner (2004).
\newblock A method for combining inference across related nonparametric bayesian models.
\newblock {\em Journal of the Royal Statistical Society Series B: Statistical Methodology\/}~{\em 66\/}(3), 735--749.

\bibitem[\protect\citeauthoryear{Muni~Toke}{Muni~Toke}{2015}]{muni2015order}
Muni~Toke, I. (2015).
\newblock The order book as a queueing system: average depth and influence of the size of limit orders.
\newblock {\em Quantitative Finance\/}~{\em 15\/}(5), 795--808.

\bibitem[\protect\citeauthoryear{Quintana, M{\"u}ller, Jara, and MacEachern}{Quintana et~al.}{2022}]{quintana2022dependent}
Quintana, F.~A., P.~M{\"u}ller, A.~Jara, and S.~N. MacEachern (2022).
\newblock The dependent dirichlet process and related models.
\newblock {\em Statistical Science\/}~{\em 37\/}(1), 24--41.

\bibitem[\protect\citeauthoryear{Rambaldi, Bacry, and Lillo}{Rambaldi et~al.}{2017}]{rambaldi2017role}
Rambaldi, M., E.~Bacry, and F.~Lillo (2017).
\newblock The role of volume in order book dynamics: a multivariate hawkes process analysis.
\newblock {\em Quantitative Finance\/}~{\em 17\/}(7), 999--1020.

\bibitem[\protect\citeauthoryear{Rasmussen}{Rasmussen}{2013}]{Rasmussen2013}
Rasmussen, J.~G. (2013).
\newblock Bayesian inference for hawkes processes.
\newblock {\em Methodology and Computing in Applied Probability\/}~{\em 15}, 623--642.

\bibitem[\protect\citeauthoryear{Reynaud-Bouret and Schbath}{Reynaud-Bouret and Schbath}{2010}]{reynaud2010adaptive}
Reynaud-Bouret, P. and S.~Schbath (2010).
\newblock Adaptive estimation for hawkes processes; application to genome analysis.

\bibitem[\protect\citeauthoryear{Rodriguez and Dunson}{Rodriguez and Dunson}{2011}]{rodriguez2011nonparametric}
Rodriguez, A. and D.~B. Dunson (2011).
\newblock Nonparametric bayesian models through probit stick-breaking processes.
\newblock {\em Bayesian Analysis (Online)\/}~{\em 6\/}(1).

\bibitem[\protect\citeauthoryear{Rodriguez, Dunson, and Gelfand}{Rodriguez et~al.}{2008}]{rodriguez2008nested}
Rodriguez, A., D.~B. Dunson, and A.~E. Gelfand (2008).
\newblock The nested dirichlet process.
\newblock {\em Journal of the American statistical Association\/}~{\em 103\/}(483), 1131--1154.

\bibitem[\protect\citeauthoryear{Rodr\'{\i}guez and M{\"u}ller}{Rodr\'{\i}guez and M{\"u}ller}{2013}]{rodriguez2013nonparametric}
Rodr\'{\i}guez, A. and P.~M{\"u}ller (2013).
\newblock {\em Nonparametric {B}ayesian inference}.
\newblock NSF-CBMS Regional Conference Series in Probability and Statistics. Institute of Mathematical Statistics.

\bibitem[\protect\citeauthoryear{Rodr{\'\i}guez, Wang, and Kottas}{Rodr{\'\i}guez et~al.}{2017}]{Rodriguez2017}
Rodr{\'\i}guez, A., Z.~Wang, and A.~Kottas (2017).
\newblock Assessing systematic risk in the s\&p500 index between 2000 and 2011: A bayesian nonparametric approach.
\newblock {\em The Annals of Applied Statistics\/}, 527--552.

\bibitem[\protect\citeauthoryear{Sornette and Utkin}{Sornette and Utkin}{2009}]{sornette2009limits}
Sornette, D. and S.~Utkin (2009).
\newblock Limits of declustering methods for disentangling exogenous from endogenous events in time series with foreshocks, main shocks, and aftershocks.
\newblock {\em Physical Review E—Statistical, Nonlinear, and Soft Matter Physics\/}~{\em 79\/}(6), 061110.

\bibitem[\protect\citeauthoryear{Teh, Jordan, Beal, and Blei}{Teh et~al.}{2004}]{teh2004sharing}
Teh, Y., M.~Jordan, M.~Beal, and D.~Blei (2004).
\newblock Sharing clusters among related groups: Hierarchical dirichlet processes.
\newblock {\em Advances in neural information processing systems\/}~{\em 17}.

\bibitem[\protect\citeauthoryear{Walker, Damien, and Lenk}{Walker et~al.}{2004}]{walker2004priors}
Walker, S., P.~Damien, and P.~Lenk (2004).
\newblock On priors with a kullback--leibler property.
\newblock {\em Journal of the American Statistical Association\/}~{\em 99\/}(466), 404--408.

\bibitem[\protect\citeauthoryear{Wu and Ghosal}{Wu and Ghosal}{2008}]{wu2008kullback}
Wu, Y. and S.~Ghosal (2008).
\newblock Kullback leibler property of kernel mixture priors in bayesian density estimation.

\bibitem[\protect\citeauthoryear{Zhang, Walder, and Rizoiu}{Zhang et~al.}{2020}]{zhang2020variational}
Zhang, R., C.~Walder, and M.-A. Rizoiu (2020).
\newblock Variational inference for sparse gaussian process modulated hawkes process.
\newblock In {\em Proceedings of the AAAI Conference on Artificial Intelligence}, Volume~34, pp.\  6803--6810.

\bibitem[\protect\citeauthoryear{Zhou, Li, Fan, Wang, Sowmya, and Chen}{Zhou et~al.}{2020}]{zhou2020efficient}
Zhou, F., Z.~Li, X.~Fan, Y.~Wang, A.~Sowmya, and F.~Chen (2020).
\newblock Efficient inference for nonparametric hawkes processes using auxiliary latent variables.
\newblock {\em Journal of Machine Learning Research\/}~{\em 21\/}(241), 1--31.

\bibitem[\protect\citeauthoryear{Zhou, Zha, and Song}{Zhou et~al.}{2013}]{zhou2013learning}
Zhou, K., H.~Zha, and L.~Song (2013).
\newblock Learning triggering kernels for multi-dimensional hawkes processes.
\newblock In {\em International conference on machine learning}, pp.\  1301--1309. PMLR.

\end{thebibliography}

\newpage 
\bigskip
\begin{center}
{\large\bf SUPPLEMENTARY MATERIAL}
\end{center}
\appendix
\counterwithin{figure}{section}
\counterwithin{table}{section}

\setcounter{page}{1}

\section{Proof of Theorem \ref{th:support}}\label{ap:KLsupport}

\begin{proof}
We show the KL property of $f_0$ by proving Conditions A1-A3 from Theorem 1 in \cite{wu2008kullback} holds. Since we don't have $\phi$ in our case, Condition A2 is automatically satisfied. The remaining proof is similar to Theorem 11 in \citep{wu2008kullback}, except that the mixing distribution is drawn from a Dirichlet process, and that the two location parameters are also being mixed. For $\forall f_0 \in \mathcal{H}_0$ and $\forall \varepsilon > 0$, there exists a finite Beta mixture function $f_{P_{\varepsilon}}$ with ${H}$ mixtures where
\begin{align}
\label{eq:bst}
    f_{P_{\varepsilon}}(x) &= \sum_{k=1}^{H} \frac{f_0\left(\frac{k-1}{H-1}\right)}{\sum_{k=1}^H f_0\left(\frac{k-1}{H-1}\right)}f_{\text{Beta}}(x \mid k, H-k) = \int f_{\text{Beta}}(x \mid a, b) dP_{\varepsilon},
\end{align}
and 
\begin{align}
\label{eq:pars}
P_{\varepsilon} = \sum_{k=1}^H \omega_k \delta_{a,b}(a_k^0, b_k^0),\quad \omega_k^0 = \frac{f_0\left(\frac{k-1}{H-1}\right)}{\sum_{k=1}^H f_0\left(\frac{k-1}{H-1}\right)},\quad a_k^0 = k, \quad b_k^0 = H-k,
\end{align}
such that 
\begin{align*}
    \int_0^1 f_0(x)\log \frac{f_0(x)}{f_{P_\varepsilon}(x)}dx < \varepsilon.
\end{align*}
Thus Condition A1 holds. We then show Condition A3 also holds. First, we define $\mathcal{C}_{\omega} \subset \mathbb{S}^{H-1}, \mathcal{C}^{h}_{ab} \subset \mathbb{R}^{2}, h = 1, \dots , H$ as sets such that 
\begin{align*}
    \mathcal{C}_{\omega} &= \left\{(\omega_1, \dots, \omega_H) : \omega_h > \omega_h^0 e^{-\frac{\varepsilon}{4}}, ~~ \sum_{h=1}^H \omega_h = 1\right\},
\end{align*}
\begin{equation}
\label{eq:eq3}
\begin{aligned}
     \mathcal{C}_{ab}^h &= \left\{(a_h, b_h): a_h  < a_h^0, b_h < b_h^0,\right.\\ &\left.(a_h^0 - a_h)\left(\log x_M + \psi(a_h + b_h) - \psi(a_h) \right) + (b_h^0 - b_h)\left(\log (1-x_M) + \psi(b_h + a_h) - \psi(b_h) \right) < \frac{\varepsilon}{4}\right\},   
\end{aligned}
\end{equation}
where 
\begin{align*}
    x_M = \max\left\{d, 1-d, \frac{a_h^0 - a_h}{a_h^0 - a_h + b_h^0 - b_h} \right\}.
\end{align*}
Finally, we let $\mathcal{C}_{ab} = \oplus_{h=1}^H \mathcal{C}_{ab}^h$, $\boldsymbol{\omega} := (\omega_1, \dots, \omega_H) \in \mathbb{S}^{H-1}, \boldsymbol{a} := (a_1, \dots, a_H) \in \mathbb{R}^H, \boldsymbol{b} := (b_1, \dots, b_H) \in \mathbb{R}^H$ and let $\mathcal{W} \subset \mathcal{H}$ be the set of finite mixture distributions induced by $\mathcal{C}_{\omega}$ and $\mathcal{C}_{ab}$:
\begin{align*}
    \mathcal{W} := \left\{P \in \mathcal{W} \mid P =\sum_{k=1}^H \omega_k \delta_{a, b}\left(a_k, b_k\right), \boldsymbol{\omega} \in \mathcal{C}_{\omega}, (a_h, b_h) \in \mathcal{C}_{ab}^h \right\}.
\end{align*}
We note that for the chosen $\boldsymbol{\omega}_0, \boldsymbol{a}_0, \boldsymbol{b}_0$, for any $\omega \in \mathcal{C}_{\omega}$, we have 
\begin{align}
\label{eq:omega}
    \frac{\sum_{j=1}^H \omega^0_h f_{\text {Beta }}\left(x \mid a_h^0, b_h^0\right)}{\sum_{j=1}^H \omega_h f_{\text {Beta }}\left(x \mid a_h^0, b_h^0\right)}<e^{\frac{\varepsilon}{4}}, \quad \forall x \in(d, 1-d).
\end{align}
We then note that 
\begin{align}
\label{eq:x_M}
    (a_h^0 - a_h)\log x + (b_h^0 - b_h)\log (1-x) \leq (a_h^0 - a_h)\log x_M + (b_h^0 - b_h)\log (1-x_M), ~~ \forall x \in (d, 1-d).
\end{align}
Thus for any $(a_h, b_h) \in \mathcal{C}_{ab}^h$, consider the Cauchy remainder form of the first-order expansion for $l(x, a_h^0, b_h^0):=\log f_{\text {Beta }}(x ; a_h^0, b_h^0)$ at $(a_h, b_h)$, we have
\begin{equation}
\begin{aligned}
\label{eq:varab}
    l(x&; a_h^0, b_h^0) = l(x; a_h, b_h) + \nabla l(x; a_h, b_h)^T \begin{bmatrix}a_h^0 - a_h \\ b_h^0 - b_h\end{bmatrix}  +\frac{\theta^2}{2}\begin{bmatrix}a_h^0 - a_h & b_h^0 - b_h\end{bmatrix}\nabla^2 l(x; a_h^*, b_h^*) \begin{bmatrix}a_h^0 - a_h \\ b_h^0 - b_h\end{bmatrix} \\
    &< l(x; a_h, b_h) + (a_h^0 - a_h)\left(\log x + \psi(a_h + b_h) - \psi(a_h) \right) + (b_h^0 - b_h)\left(\log (1-x) + \psi(b_h + a_h) - \psi(b_h) \right) \\
    &< l(x; a_h, b_h) + (a_h^0 - a_h)\log x_M + (b_h^0 - b_h)\log (1-x_M) \\
    &< l(x; a_h, b_h) + \frac{\varepsilon}{4} , ~~ \forall x \in (d, 1-d).
\end{aligned}
\end{equation}
Note that the chain of inequalities are based on equations \eqref{eq:eq3} and \eqref{eq:x_M}, and the fact that 
$$\nabla^2 l(x ; a, b) \prec 0, a > 0, b > 0.$$
Finally, note that \eqref{eq:varab} is equivalent to 
\begin{align}
\label{eq:varab2}
\frac{f_{\text {Beta }}\left(x \mid a_h^0, b_h^0\right)}{f_{\text {Beta }}\left(x \mid a_h, b_h\right)}<e^{\frac{\varepsilon}{4}}, \quad \forall x \in(d, 1-d),
\end{align}
and if for $h = 1, \dots, H$, $(a_h, b_h) \in \mathcal{C}_{ab}^h$, for any $(\omega_1, \dots, \omega_H) \in \mathcal{C}_{\omega}$, we have 
$$\frac{\sum_{j=1}^H \omega_h f_{\text {Beta }}\left(x \mid a_h^0, b_h^0\right)}{\sum_{j=1}^H \omega_h f_{\text {Beta }}\left(x \mid a_h, b_h\right)}<e^{\frac{\varepsilon}{4}}, \quad \forall x \in(d, 1-d).$$
Combining equations \eqref{eq:omega} and \eqref{eq:varab2}, we have, for any $\boldsymbol{\omega} \in \mathcal{C}_{\omega}, (a_h, b_h) \in \mathcal{C}_{ab}^h, h = 1, \dots, H$, we have 
\begin{align*}
    \frac{\sum_{j=1}^H \omega_h^0 f_{\text {Beta }}\left(x \mid a_h^0, b_h^0\right)}{\sum_{j=1}^H \omega_h f_{\text {Beta }}\left(x \mid a_h, b_h\right)} = \frac{\sum_{j=1}^H \omega_h^0 f_{\text {Beta }}\left(x \mid a_h^0, b_h^0\right)}{\sum_{j=1}^H \omega_h f_{\text {Beta }}\left(x \mid a_h^0, b_h^0\right)} \cdot \frac{\sum_{j=1}^H \omega_h f_{\text {Beta }}\left(x \mid a_h^0, b_h^0\right)}{\sum_{j=1}^H \omega_h f_{\text {Beta }}\left(x \mid a_h, b_h\right)} = e^{\frac{\varepsilon}{2}}, \quad \forall x \in (d, 1-d).
\end{align*}
Thus 
\begin{align}
\label{eq:int1}
\int_d^{1-d} f_0(x) \log \frac{f_{P_{\varepsilon}}(x)}{f_{P}(x)}dx < \int_d^{1-d} f_0(x) dx \cdot \frac{\varepsilon}{2} < \frac{\varepsilon}{2}.
\end{align}
We then consider the scenario where $x \in (0, d] \bigcup [1-d, 1)$. We want to show that the likelihood ratio $ \frac{f_{\text {Beta }}\left(x \mid a_h^0, b_h^0\right)}{f_{\text {Beta }}\left(x \mid a_h, b_h\right)}$ has a uniform finite upper bound over all $x \in (0, d] \bigcup [1-d, 1)$ and $(a_h, b_h) \in \mathcal{C}_{ab}^h, h = 1,\dots,H$, i.e.
\begin{align*}
    \sup_{x \in (0, d] \bigcup [1-d, 1)}\frac{f_{\text {Beta }}\left(x \mid a_h^0, b_h^0\right)}{f_{\text {Beta }}\left(x \mid a_h, b_h\right)} &= \frac{\operatorname{Be}(a_h, b_h)}{\operatorname{Be}(a_h^0, b_h^0)}\sup_{x \in (0, d] \bigcup [1-d, 1)}x^{a_h^0 - a_h}(1-x)^{b_h^0 - b_h} \\&= \frac{\operatorname{Be}(a_h, b_h)}{\operatorname{Be}(a_h^0, b_h^0)} d^{a_h^0+b_h^0-a_h-b_h}\leq \frac{\operatorname{Be}(a_h, b_h)}{\operatorname{Be}(a_h^0, b_h^0)},
\end{align*}
which follows from the fact that $a_h < a_h^0, b_h < b_h^0$. Thus we have 
\begin{align*}
    \sup_{\substack{\boldsymbol{a}, \boldsymbol{b} \in \mathcal{C}_{ab} \\ \boldsymbol{\omega} \in \mathcal{C}_{\omega}}} \sup_{x \in (0, d] \bigcup [1-d, 1)} \frac{\sum_{j=1}^H \omega_h^0 f_{\text {Beta }}\left(x \mid a_h^0, b_h^0\right)}{\sum_{j=1}^H \omega_h f_{\text {Beta }}\left(x \mid a_h, b_h\right)} &\leq e^{-\frac{\varepsilon}{4}} \sup_{\substack{\boldsymbol{a}, \boldsymbol{b} \in \mathcal{C}_{ab}}} \frac{\operatorname{Be}\left(a_h, b_h\right)}{\operatorname{Be}\left(a_h^0, b_h^0\right)} 
    \leq  \sup_{\substack{\boldsymbol{a}, \boldsymbol{b} \in \mathcal{C}_{ab}}} \frac{\operatorname{Be}\left(a_h, b_h\right)}{\operatorname{Be}\left(a_h^0, b_h^0\right)} := M <+\infty.
\end{align*}
Thus we have 
\begin{align*}
    \int_0^d f_0(x) \log \frac{f_{P_{\varepsilon}}(x)}{f_P(x)} d x+\int_{1-d}^1 f_0(x) \log \frac{f_{P_{\varepsilon}}(x)}{f_P(x)} d x<M\left(F_0(d)+1-F_0(1-d)\right).
\end{align*}
Thus, we can choose $d$ small enough such that $M\left(F_0(d)+1-F_0(1-d)\right) < \frac{\varepsilon}{2}$, such that 
\begin{align*}
    \int_0^1 f_0(x)\log \frac{f_{P_{\varepsilon}}(x)}{f_P(x)} d x &= \int_0^d f_0(x) \log \frac{f_{P_{\varepsilon}}(x)}{f_P(x)} d x+\int_{1-d}^1 f_0(x) \log \frac{f_{P_{\varepsilon}}(x)}{f_P(x)} d x + \int_{d}^{1-d} f_0(x) \\
    &< \frac{\varepsilon}{2} + \frac{\varepsilon}{4} + \frac{\varepsilon}{4} = \varepsilon.
\end{align*}
Finally, as $\mathcal{C}_{\omega}, \mathcal{C}_{ab}$ are nonempty and open sets, we have $\Pi(\mathcal{W}) > 0$, thus $f_0 \in KL(\Pi)$.
\end{proof}

\begin{corollary} Consider $K$ continuous densities on $[0,1]$, i.e. let $f^1_0, \dots, f^K_0 \in \mathcal{H}_0$. Consider the following joint prior $\Pi^* : \oplus_{h=1}^H \mathcal{H}_0 \rightarrow \mathbb{R}$ for $(f^1, \dots, f^K)$ such that 
\begin{equation}
\begin{aligned}
    f^k &= \delta g_0 + (1-\delta) g^k, k = 1, \dots, K, \\
    g_k(x) &= \int f_{\operatorname{Beta}}(x \mid a, b) dP(a,b), k = 0, 1, \dots, K, \\
    P &\sim \operatorname{DP}(\gamma, \operatorname{Gamma}(c_a, d_a)\times \operatorname{Gamma}(c_b, d_b)), \\
    \delta &\sim \operatorname{Dirichlet}(1,1),
\end{aligned}
\end{equation}
we have,
\begin{align*}
    \Pi^*\left(\{f^1, \dots, f^K: \operatorname{KL}(f_0^k, f^k) < \varepsilon, k = 1, \dots, K\}\right) > 0.
\end{align*}
\end{corollary}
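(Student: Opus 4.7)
The plan is to reduce the multivariate KL problem to $K^2$ essentially independent applications of Theorem \ref{th:support} by exploiting the convex-mixture structure of the prior and allowing the weight $\varepsilon$ to be arbitrarily small. Writing each aggregate excitation function as
$$\phi_{k,\ell}(t) = \varepsilon \, \phi^{(0)}(t) + (1-\varepsilon)\,\phi^{(k,\ell)}(t),$$
where $\phi^{(0)}(t) = \int f_{\text{Beta}}(t\mid a,b,T_0)\, dH_0(a,b)$ and $\phi^{(k,\ell)}(t) = \int f_{\text{Beta}}(t\mid a,b,T_0)\, dH_{k,\ell}(a,b)$, the key analytic observation is the pointwise domination $\phi_{k,\ell}(t) \geq (1-\varepsilon)\,\phi^{(k,\ell)}(t)$ on $(0, T_0)$. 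Taking logs (which preserves the direction of the inequality for the ratio $\phi^0_{k,\ell}/\phi_{k,\ell}$) and integrating against $\phi^0_{k,\ell}$ yields
$$D_{KL}(\phi^0_{k,\ell} \, \| \, \phi_{k,\ell}) \;\leq\; D_{KL}(\phi^0_{k,\ell} \, \| \, \phi^{(k,\ell)}) + \log\frac{1}{1-\varepsilon}.$$

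Given this bound, the corollary follows by constructing a positive-probability event on which every right-hand side is less than $\epsilon$. Fix $\epsilon > 0$, choose $\delta \in (0,1)$ so that $\log\frac{1}{1-\delta} < \epsilon/2$, and define
$$A = \{\varepsilon < \delta\}, \qquad B_{k,\ell} = \bigl\{H_{k,\ell}: D_{KL}(\phi^0_{k,\ell} \, \| \, \phi^{(k,\ell)}) < \epsilon/2\bigr\}, \quad k,\ell = 1,\dots,K.$$
Since $\varepsilon \sim \text{Uniform}[0,1]$, we have $\Pi^*(A) > 0$. By Theorem \ref{th:support} applied to the DP mixture with parameters $(c_a^I, d_a^I, c_b^I, d_b^I)$, each $B_{k,\ell}$ has positive probability. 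Because the prior specification in Section \ref{sec:bnp} makes $\varepsilon$, $H_0$, and the $\{H_{k,\ell}\}$ mutually independent, the intersection satisfies
$$\Pi^*\!\left(A \cap \bigcap_{k,\ell} B_{k,\ell}\right) = \Pi^*(A)\,\prod_{k,\ell} \Pi^*(B_{k,\ell}) > 0,$$
and on this event the displayed KL bound gives $D_{KL}(\phi^0_{k,\ell} \, \| \, \phi_{k,\ell}) < \epsilon/2 + \epsilon/2 = \epsilon$ simultaneously for every $(k,\ell)$, which is exactly what needs to be shown.

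I do not anticipate any substantial obstacle; the only point requiring care is the recognition that the common component $\phi^{(0)}$ need not be constrained at all. Because $\varepsilon$ is allowed to lie arbitrarily close to zero with positive prior probability, the contribution of $H_0$ to each $\phi_{k,\ell}$ is controlled uniformly in $H_0$ through the additive penalty $\log\frac{1}{1-\varepsilon}$, so no joint support condition on $H_0$ is ever required. This reduces what looks like a genuinely multivariate problem to a clean product-measure argument once the pointwise bound is in place.
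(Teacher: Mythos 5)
Your argument is correct, and it takes a genuinely different route from the paper's. The paper constrains the common component $g^0$ to lie in a Kullback--Leibler neighborhood of the average density $f_0^0 = \frac{1}{K}\sum_k f_0^k$, applies convexity of the KL divergence in its second argument to get $\mathrm{KL}(f_0^k, \delta g^0 + (1-\delta)g^k) \le \delta\,\mathrm{KL}(f_0^k,g^0) + (1-\delta)\,\mathrm{KL}(f_0^k,g^k)$, and then needs the quantity $M = \sup_{g^0}\mathrm{KL}(f_0^k,g^0)$ over that neighborhood to be finite so that choosing $\delta < \varepsilon/(2M-\varepsilon)$ kills the first term. You instead leave the common component entirely unconstrained and use the pointwise domination $f^k \ge (1-\delta)g^k$, which converts the common component's entire contribution into the additive, $H_0$-uniform penalty $\log\frac{1}{1-\delta}$. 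This buys you two things: (i) you never need the finiteness of $M$, which the paper asserts but does not verify (it does hold, but establishing it requires relating $\mathrm{KL}(f_0^k,\cdot)$ to $\mathrm{KL}(f_0^0,\cdot)$ via $f_0^0 \ge f_0^k/K$, a step the paper omits); and (ii) your threshold on the mixture weight, $\delta < 1 - e^{-\epsilon/2}$, is explicit and independent of the target densities, whereas the paper's depends on $M$. The only cost is that your argument says nothing about the common component ever being useful, but for a pure prior-positivity statement that is irrelevant. The independence factorization you invoke matches the model actually intended (independent Dirichlet processes for $H_0$ and each $H_{k,\ell}$, independent of the mixing weight), which is also what the paper's own proof uses when it writes the probability of the intersection as a product.
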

\begin{proof}
Let $f_0^0(x) := \frac{1}{K} \sum_{k=1}^K f^{k}_0(x)$. For $\varepsilon > 0$, we let 
\begin{align*}
    \mathcal{G}_{0} = \{g^{0} \in \mathcal{H}_0 : \text{KL}(f_0^0, g^{0}) < \frac{\varepsilon}{2}\}, \quad
    \mathcal{G}^{k} = \{g^{k} \in \mathcal{H}_0 : \text{KL}(f^k_{0}, g^k) < \frac{\varepsilon}{2}\}.
\end{align*}
Note that from Theorem 1.1 we have $\Pi_{g^{k}}\left(g^k \in \mathcal{G}^k\right) > 0, k = 0, 1, \dots, K$.
Let $M := \max_{1 \leq j, k \leq M} \sup_{g^0 \in \mathcal{G}_0} \text{KL}(f_{ij}^0, g_0) < +\infty$, we have, based on the convexity of KL divergence:
\begin{align*}
    \text{KL}(f^k_{0}, \delta g^0 + (1-\delta) g^k) &\leq \delta \text{KL}(f^k_{0}, g^0) + (1-\delta)\text{KL}(f^k_{0}, g^k) < \delta M + (1-\delta) \frac{\varepsilon}{2} < \varepsilon, 
\end{align*}
for all $\forall \delta < \frac{\varepsilon}{2M - \varepsilon}$, $g^k \in \mathcal{G}^k, k = 0, 1, \dots, K$.
Thus
\begin{align*}
&\Pi^*\left(\left\{f^1, \ldots, f^K: \operatorname{KL}\left(f_0^k, f^k\right)<\varepsilon, k=1, \ldots, K\right\}\right) \\
=&\Pi^*\left(\left\{g^0, g^1, \ldots, g^K,\delta: \operatorname{KL}\left(f_0^k, \delta g^0+(1-\delta) g^k\right)<\varepsilon, k=1, \ldots, K\right\}\right) \\
>&\Pi^*\left(\left\{g^0, g^1, \ldots, g^K,\delta: g^k \in \mathcal{G}^{k}, k = 0, 1, \dots, K, \delta < \frac{\varepsilon}{2M-\varepsilon} 
\right\}\right)\\
=& 
    \prod_{0 \leq k \leq K}\Pi_{g^{k}}(g^{k} \in \mathcal{G}^{k})\Pi_{\delta}(\delta < \frac{\varepsilon}{2M- \varepsilon}) > 0.
\end{align*}
\end{proof}

\section{Details of the MCMC algorithm}\label{se:mcmc_details}

\subsection{Sampling \texorpdfstring{$\boldsymbol{\mu}$}{mu} and \texorpdfstring{{$\boldsymbol{\alpha}$}}{alpha}}
\label{subsec:samp-alphamu}
Given the branching structure, the full conditional posteriors for MHP parameters $\boldsymbol{\mu}$ and $\boldsymbol{\alpha}$ are conjugate to their Gamma priors:
\begin{align*}
    \mu_\ell \mid \mathbf{B}, \mathbf{X} &\sim \text{Gamma}(e_\ell + |I_{\ell}|, f_{\ell} + T), ~~ \ell = 1, \dots, K \\
    \alpha_{k,\ell} \mid \mathbf{B}, \mathbf{X} &\sim \text{Gamma}\left(g_{k,\ell} + |O_{k,\ell}|, h_{k,\ell} + \sum_{d_i=k} \tilde{\Phi}_{k, \ell}\left(T-t_i\right)\right), ~~ k, \ell = 1, \dots, K.
\end{align*}

\subsection{Sampling \texorpdfstring{$\boldsymbol{b}$}{B}}
\label{subsec:samp-branch}
Given the model parameters, we can update the branching structure using the following formula. For $j = 1, \dots, n, i = 1, \dots j$, we have:
\begin{align*}
\label{eq:p}
	p\left(\mathbf{B}_{j,j}=1,\mathbf{B}_{j,-j}=0 \mid \boldsymbol{\mu}, \boldsymbol{\alpha}, \boldsymbol{a}, \boldsymbol{b}, \boldsymbol{p},  \mathbf{X}\right) &= \frac{\mu_{d_j}}{\mu_{d_j} + \sum_{i=1}^{j-1} \alpha_{d_i,d_j}\tilde{\phi}_{d_i, d_j}(t_j - t_i)} \\
    p\left(\mathbf{B}_{j,i}=1,\mathbf{B}_{j,-i}=0 \mid \boldsymbol{\mu}, \boldsymbol{\alpha}, \boldsymbol{a}, \boldsymbol{b}, \boldsymbol{p},  \mathbf{X}\right) &= \frac{\alpha_{d_i,d_j}\tilde{\phi}_{d_i, d_j}(t_j - t_i)}{\mu_{d_j} + \sum_{i\neq j} \alpha_{d_i,d_j}\tilde{\phi}_{d_i, d_j}(t_j - t_i)}.
\end{align*}

\subsection{Sampling \texorpdfstring{$\mathbf{Z}$}{Z}}
\label{subsec:sample-Z}

Given $\boldsymbol{b}$ and model parameters $\boldsymbol{a}^{0}, \boldsymbol{a}^{k,\ell}, \boldsymbol{b}^{0}, \boldsymbol{b}^{k,\ell}$, $\boldsymbol{p}^0$, $\boldsymbol{p}^{k,\ell}$ and $\varepsilon$, we can generate posterior samples of the latent allocation variables sequentially, from the following discrete distribution. If $B_{i,j}=1$ and $(i,j) \in \mathcal{T}_{i,j}$, 
\begin{align*}
    &P(W_{i,j}=0 \mid  \mathbf{B}, \mathbf{X}, \boldsymbol{a}, \boldsymbol{b}, \boldsymbol{p},\varepsilon) = \frac{\sum_{h=1}^{L_0}  \varepsilon p_h^0 f_{\text { Beta }}\left(t_j - t_i \mid a_h^0, b_h^0\right)}{\sum_{h=1}^{L_0} \varepsilon p_h^0 f_{\text { Beta }}\left(t_j - t_i \mid a_h^0, b_h^0\right) + \sum_{h=1}^{L} (1-\varepsilon) p_h^{k, \ell} f_{\text { Beta }}\left(t_j - t_i \mid a_h^{k, \ell}, b_h^{k, \ell}\right)} \\
    &P(W_{i,j}=1 \mid  \mathbf{B}, \mathbf{X}, \boldsymbol{a}, \boldsymbol{b}, \boldsymbol{p},\varepsilon) = \frac{  \sum_{h=1}^{L}(1-\varepsilon) p_h^{k, \ell} f_{\text { Beta }}\left(t_j - t_i \mid a_h^{k, \ell}, b_h^{k, \ell}\right)}{\sum_{h=1}^{L_0} \varepsilon p_h^0 f_{\text { Beta }}\left(t_j - t_i \mid a_h^0, b_h^0\right) + \sum_{h=1}^{L} (1-\varepsilon)p_h^{k, \ell} f_{\text { Beta }}\left(t_j - t_i \mid a_h^{k, \ell}, b_h^{k, \ell}\right)} \\
    &P(Z_{i,j} = h \mid W_{i,j}=0, \mathbf{B}, \mathbf{X}, \boldsymbol{a}, \boldsymbol{b}, \boldsymbol{p},\varepsilon) = \frac{ p_h^0 f_{\text { Beta }}\left(t_j - t_i \mid a_h^0, b_h^0\right) }{\sum_{h=1}^{L_0}  p_h^0 f_{\text { Beta }}\left(t_j - t_i \mid a_h^0, b_h^0\right)}, \text{ for }h = 1, \dots, L_0 \\ &P(Z_{i,j} = h \mid W_{i,j}=1, \mathbf{B}, \mathbf{X}, \boldsymbol{a}, \boldsymbol{b}
    , \boldsymbol{p},\varepsilon) = \frac{p_h^{k, \ell} f_{\text { Beta }}\left(t_j - t_i \mid a_h^{k, \ell}, b_h^{k, \ell}\right)}{\sum_{h=1}^{L} p_h^{k, \ell} f_{\text { Beta }}\left(t_j - t_i \mid a_h^{k, \ell}, b_h^{k, \ell}\right)}, \text{ for }h = 1, \dots, L.
\end{align*}


\subsection{Sampling \texorpdfstring{$\boldsymbol{a}$}{a} and \texorpdfstring{$\boldsymbol{b}$}{b}}
\label{subsec:samp-ab}
Up to a normalizing constant and approximation of the compensator, the full conditional distribution for $\boldsymbol{a}^{0}, \boldsymbol{a}^{k,\ell}, \boldsymbol{b}^{0}$ and $\boldsymbol{b}^{k,\ell}$ can be written as
\begin{align*}
p(a_h^{k, \ell} \mid \mathbf{X}, \mathbf{B}, \mathbf{Z}, \cdot) &\propto \left[\frac{\Gamma\left(a_h^{k, \ell}+b_h^{k, \ell}\right)}{\Gamma\left(a_h^{k, \ell}\right) }\right]^{N_{h}^{k,\ell}} \prod_{(i,j) \in \mathcal{T}_{k,\ell}} \left[ (t_j - t_i)^{a_h^{k,\ell}-1} \right]^{B_{i,j} W_{i,j}I(Z_{i,j} = h)}   {\left(a_{h}^{k, \ell}\right)}^{c_a - 1} e^{-d_a a_{h}^{k, \ell}} \\
p(a_h^0 \mid \mathbf{X}, \mathbf{B}, \mathbf{Z}, \cdot) &\propto \left[\frac{\Gamma\left(a_h^0+b_h^0\right)}{\Gamma\left(a_h^0\right) }\right]^{N_{h}^{0}} \prod_{k=1}^K\prod_{\ell=1}^K  \prod_{(i,j) \in \mathcal{T}_{k,\ell}} \left[ (t_j - t_i)^{a_h^0-1} \right]^{B_{i,j} (1-W_{i,j})I(Z_{i,j} = h)}  {\left(a_{h}^0\right)}^{c_a - 1} e^{-d_a a_{h}^0}\\
p(b_h^{k, \ell} \mid \mathbf{X}, \mathbf{B}, \mathbf{Z}, \cdot) &\propto \left[\frac{\Gamma\left(a_h^{k, \ell}+b_h^{k, \ell}\right)}{\Gamma\left(b_h^{k, \ell}\right) }\right]^{N_{h}^{k,\ell}} \prod_{(i,j) \in \mathcal{T}_{k,\ell}} \left[ [T_b - (t_j - t_i])^{b_h^{k,\ell}-1} \right]^{B_{i,j} W_{i,j}I(Z_{i,j} = h)}   {\left(b_{h}^{k, \ell}\right)}^{c_b - 1} e^{-d_b b_{h}^{k, \ell}} \\
p(b_h^0 \mid \mathbf{X}, \mathbf{B}, \mathbf{Z}, \cdot) &\propto \left[\frac{\Gamma\left(a_h^0+b_h^0\right)}{\Gamma\left(b_h^0\right) }\right]^{N_{h}^{0}} \prod_{k=1}^K\prod_{\ell=1}^K  \prod_{(i,j) \in \mathcal{T}_{k,\ell}} \left[ [T_b - (t_j - t_i)]^{b_h^0-1} \right]^{B_{i,j} (1-W_{i,j})I(Z_{i,j} = h)}  {\left(b_{h}^0\right)}^{c_b - 1} e^{-d_b b_{h}^0},
\end{align*}
where 
\begin{align*}
    N_{h}^{k,\ell} = \sum_{(i, j) \in \mathcal{T}_{k, \ell}} B_{i, j} (1-W_{i, j}) I\left(Z_{i, j}=h\right), & N_{h}^{0} = \sum_{k=1}^K \sum_{\ell =1}^K\sum_{(i, j) \in \mathcal{T}_{k, \ell}} B_{i, j} W_{i, j} I\left(Z_{i, j}=h\right).
\end{align*}
 As the posterior distributions for $\boldsymbol{a}^{k,\ell}, \boldsymbol{a}^0, \boldsymbol{b}^{k,\ell}, \boldsymbol{b}^0$ are not conjugate to their priors, we update these parameters through Metropolis-within-Gibbs updates on the log scale.

\subsection{Sampling \texorpdfstring{$\boldsymbol{p}$}{p} and \texorpdfstring{$\boldsymbol{\varepsilon}$}{eps}}
Finally, the full conditional posteriors for $\boldsymbol{p}$ and $\varepsilon$ are conjugate to their Dirichlet and Beta priors:
\begin{align*}
    \boldsymbol{p}^0 \mid \mathbf{B}, \mathbf{W}, \mathbf{Z} &\sim \text{Dirichlet} \left(\frac{\alpha_{DP}}{L_0} + N_1^0  , \dots, \frac{\alpha_{DP}}{L_0} + N_{L_0}^0 \right) \\
    \boldsymbol{p}^{k,\ell} \mid \mathbf{B}, \mathbf{W}, \mathbf{Z} &\sim \text{Dirichlet} \left(\frac{\alpha_{DP}}{L} + N_1^{k,\ell}, \dots, \frac{\alpha_{DP}}{L} + N_L^{k,\ell} \right) \\
    \varepsilon \mid \mathbf{B}, \mathbf{W}, \mathbf{Z} &\sim \text{Beta} \left(1 + 
\sum_{h=1}^{L_0} N_h^0, 1 + \sum_{k=1}^K \sum_{\ell=1}^K \sum_{h=1}^{L} N_h^{k,\ell}\right).
\end{align*}

\section{Stochastic Variational inference}\label{ap:variationalinfdetails}

\subsection{Computing ELBO}

Under our modeling assumptions, the ELBO can be expressed according to the following formula:

\begin{equation}
\label{eq:elbo}
\begin{aligned}
    \operatorname{ELBO}_{\boldsymbol{\eta}} &= \sum_{\ell = 1}^K \operatorname{ELBO}_{\boldsymbol{\eta}} (\mu_\ell) + \sum_{k=1}^K \sum_{\ell=1}^K \operatorname{ELBO}_{\boldsymbol{\eta}}(\alpha_{k,\ell}) + \sum_{h=1}^{L_0} \mathrm{ELBO}_{\boldsymbol{\eta}}\left(p_h^0\right) + \mathrm{ELBO}_\eta\left(a_h^0\right) + \mathrm{ELBO}_\eta\left(b_h^0\right) \\
    &+ \sum_{k=1}^K \sum_{\ell=1}^K \sum_{h=1}^L  \operatorname{ELBO}_{\boldsymbol{\eta}}\left(p_h^{k, \ell}\right) + \operatorname{ELBO}_\eta\left(a_h^{k, \ell}\right) + \operatorname{ELBO}_\eta\left(b_h^{k, \ell}\right) + \mathrm{ELBO}_{\boldsymbol{\eta}}(\varepsilon) \\
    & + \sum_{k=1}^K \sum_{\ell=1}^K \sum_{(i, j) \in \mathcal{T}_{k, \ell}} \eta_{B_{i, j}}  \log \frac{T_b}{(t_i-t_j)(T_b - (t_i-t_j))} \\
    & + \sum_{h=1}^{L_0}\mathrm{E}_{q_{\eta_{a_h^0, b_h^0}}}\left[\log\frac{\Gamma\left(a_h^0+b_h^0\right)}{\Gamma\left(a_h^0\right) \Gamma\left(b_h^0\right)}\right] \mathrm{E}_{q_{\eta_{\mathrm{B}, \mathrm{w}, \mathrm{z}}}}\left[N_h^0\right] \\
    &+ \sum_{k=1}^K \sum_{\ell=1}^K\sum_{h=1}^{L}\left[\log\frac{\Gamma\left(a_h^{k, \ell}+b_h^{k, \ell}\right)}{\Gamma\left(a_h^{k, \ell}\right) \Gamma\left(b_h^{k, \ell}\right)}\right] \mathrm{E}_{q_{\eta_{\mathrm{B}, \mathrm{w}, \mathrm{z}}}}\left[N_h^{k, \ell}\right],
\end{aligned}
\end{equation}
where
\begin{equation*}
\begin{aligned}
   \operatorname{ELBO}_{\boldsymbol{\eta}}(\mu_\ell) &= \left[\psi\left(\eta_{\mu, \ell, 1}\right)-\log \left(\eta_{\mu, \ell, 2}\right)\right]\left(\operatorname{E}_{q_{\eta_{\mathrm{B}}}}\left[\left|I_{ \ell}\right|\right]+a_\mu-\eta_{\mu, \ell, 1}\right) - \frac{\eta_{\mu, \ell, 1}}{\eta_{\mu, \ell, 2}}(T + b_\mu - \eta_{\mu,\ell, 2}) \\
  \operatorname{ELBO}_{\boldsymbol{\eta}}(\alpha_{k,\ell}) &= \left[\psi\left(\eta_{\alpha, k, \ell, 1}\right)-\log \left(\eta_{\alpha, k, \ell, 2}\right)\right]\left(\operatorname{E}_{q_{\eta_{\mathrm{B}}}}\left[\left|O_{k, \ell}\right|\right]+e_{k, \ell}-\eta_{\alpha, k, \ell, 1}\right)-\frac{\eta_{\alpha, k, \ell, 1}}{\eta_{\alpha, k, \ell, 2}}\left(n_k+f_{k, \ell}-\eta_{\alpha, k, \ell, 2}\right) \\
  \operatorname{ELBO}_{\boldsymbol{\eta}}(p_h^0) &= \left[\psi\left(\eta_{p, 0, h}\right)-\psi\left(\sum_{h=1}^{L_0} \eta_{p, 0, h}\right)\right]\left(\mathrm{E}_{q_{\eta_{\mathbf{B}, \mathbf{w}, \mathbf{z}}}}\left[N_h^0\right]+\frac{\alpha_{D P}}{L_0}-\eta_{p, 0, j}\right)\\
  \operatorname{ELBO}_{\boldsymbol{\eta}}(p_h^{k,\ell}) &= \left[\psi\left(\eta_{p, k,\ell, h}\right)-\psi\left(\sum_{h=1}^{L} \eta_{p, k, \ell, h}\right)\right]\left(\mathrm{E}_{q_{\eta_{\mathbf{B}, \mathbf{w}, \mathbf{z}}}}\left[N_h^{k,\ell}\right]+\frac{\alpha_{D P}}{L}-\eta_{p, k,\ell, j}\right) \\
  \operatorname{ELBO}_{\boldsymbol{\eta}}(\varepsilon) &= \left[\psi\left(\eta_{\varepsilon}\right)-\psi\left(1\right)\right]\left(\sum_{h=1}^{L_0}\mathrm{E}_{q_{\eta_{\mathbf{B}, \mathbf{w}, \mathbf{z}}}}\left[N_h^0\right]+1-\eta_{\varepsilon}\right) \\&+ \left[\psi\left(1-\eta_{\varepsilon}\right)-\psi\left(1\right)\right]\left(\sum_{k=1}^K \sum_{\ell=1}^K\sum_{h=1}^{L}\mathrm{E}_{q_{\eta_{\mathbf{B}, \mathbf{w}, \mathbf{z}}}}\left[N_h^{k, \ell}\right]+1-\eta_{\varepsilon}\right) \\
  \operatorname{ELBO}_{\boldsymbol{\eta}}(a_h^{k,\ell}) &= \frac{\eta_{a, k, \ell, h, 1}}{\eta_{a, k, \ell, h, 2}}\left[\sum_{(i, j) \in \mathcal{T}_{k, \ell}} \eta_{B_{i, j}}\eta_{W_{i, j}} \eta_{Z_{i,j,h}}\log \left(\frac{t_i -t_j}{T_b} \right) + d_a - \eta_{a, k, \ell, h, 2} \right] \\
  \operatorname{ELBO}_{\boldsymbol{\eta}}(b_h^{k,\ell}) &= \frac{\eta_{b, k, \ell, h, 1}}{\eta_{b, k, \ell, h, 2}} \left[\sum_{(i, j) \in \mathcal{T}_{k, \ell}} \eta_{B_{i, j}}\eta_{W_{i, j}}\eta_{Z_{i,j,h}}\log \left(1-\frac{t_i -t_j}{T_b} \right) + d_b -  \eta_{b, k, \ell, h, 2}\right] \\
    \operatorname{ELBO}_{\boldsymbol{\eta}}(a_h^{0}) &= \frac{\eta_{a, 0, h, 1}}{\eta_{a, 0, h, 2}}\left[\sum_{k=1}^K \sum_{\ell=1}^K\sum_{(i, j) \in \mathcal{T}_{k, \ell}} \eta_{B_{i, j}}\left(1- \eta_{W_{i, j}}\right) \eta_{Z_{i,j,h}}\log \left(\frac{t_i -t_j}{T_b} \right) + d_a - \eta_{a, 0, h, 2} \right] \\
  \operatorname{ELBO}_{\boldsymbol{\eta}}(b_h^{0}) &= \frac{\eta_{b, 0, h, 1}}{\eta_{b, 0, h, 2}} \left[\sum_{k=1}^K \sum_{\ell=1}^K\sum_{(i, j) \in \mathcal{T}_{k, \ell}} \eta_{B_{i, j}}\left(1- \eta_{W_{i, j}}\right) \eta_{Z_{i,j,h}}\log \left(1-\frac{t_i -t_j}{T_b} \right) + d_b -  \eta_{b, 0, h, 2}\right] .
\end{aligned}
\end{equation*}


\subsection{Updating \texorpdfstring{$\boldsymbol{\eta}_{\mu_p^0}, \boldsymbol{\eta}_{\mu_p^{k,\ell}}$}{etap}}
\begin{align*}
\eta_{p, 0, j}^{(r+1)} &= (1 - \rho_r)\eta_{p, 0, j}^{(r)} + \rho_r \left(\frac{\alpha_{DP}}{L_0} + \sum_{k=1}^K \sum_{\ell=1}^L \sum_{(i, j) \in \mathcal{T}_{k, i}} \eta^{(r)}_{B_{i, j}} \eta^{(r)}_{W_{i, j}} \eta^{(r)}_{Z_{i,j,h,0}} .\right) \\
\eta_{p, k, \ell, j}^{(r+1)} &= (1 - \rho_r)\eta_{p, k, \ell, j}^{(r)} + \rho_r \left(\frac{\alpha_{DP}}{L_0} + \sum_{(i, j) \in \mathcal{T}_{\mathcal{T}, \boldsymbol{i}}} \eta^{(r)}_{B_{i, j}}\left(1-\eta_{W_{i, j}}^{(r)}\right)\eta_{Z_{i,j,h,1}}^{(r)}.\right)   
\end{align*}

\subsection{Updating \texorpdfstring{$\boldsymbol{\eta}_{\varepsilon}$}{etavar}}
\begin{align*}
    \eta_{\varepsilon, 1}^{(r+1)} &= (1 - \rho_r)\eta_{\varepsilon, 1}^{(r)} + \rho_r \left(1 + \sum_{h=1}^{L_0} \sum_{k=1}^K \sum_{\ell=1}^L \sum_{(i, j) \in \mathcal{T}_{k, i}} \eta_{B_{i, j}}^{(r)} \eta_{W_{i, j}}^{(r)} \eta_{Z_{i, j, h, 0}}^{(r)}\right) \\
        \eta_{\varepsilon, 2}^{(r+1)} &= (1 - \rho_r)\eta_{\varepsilon, 2}^{(r)} + \rho_r \left(1 + \sum_{k=1}^K \sum_{\ell=1}^K \sum_{h=1}^L \sum_{(i, j) \in \mathcal{T}_{\mathcal{T}, i}} \eta_{B_{i, j}}^{(r)}\left(1-\eta_{W_{i, j}}^{(r)}\right) \eta_{Z_{i, j, h, 1}}^{(r)}\right)
\end{align*}

\subsection{Updating \texorpdfstring{$\boldsymbol{\eta}_{\mu_\ell}, \boldsymbol{\eta}_{\alpha_{k, \ell}}$}{etamu}}
\begin{align*}
    \eta_{\mu, \ell, 1}^{(r+1)} &= \left(1-\rho_r\right)\eta_{\mu, \ell, 1}^{(r)} + \rho_r \left(a_\mu + \sum_{i=1}^N\eta_{B_{i,i}}^{(r)}\right) \\
    \eta_{\mu, \ell, 2}^{(r+1)} &= \left(1-\rho_r\right)\eta_{\mu, \ell, 1}^{(r)} + \rho_r \left(b_\mu + T\right) \\
    \eta_{\alpha, k, \ell, 1}^{(r+1)} &= (1 - \rho_r) \eta_{\alpha, k, \ell, 1}^{(r)} + \rho_r \left( e_{k, \ell} + \sum_{d_i=k, d_j=\ell, i<j} \eta_{B_{i,j}}^{(r)}\right) \\
    \eta_{\alpha, k, \ell, 2}^{(r+1)} &= (1 - \rho_r) \eta_{\alpha, k, \ell, 2}^{(r)} + \rho_r \left(n_k + f_{k,\ell}\right)
\end{align*}
\subsection{Updating \texorpdfstring{$\boldsymbol{\eta}_{a_{k,\ell,h}}, \boldsymbol{\eta}_{b_{k,\ell,h}}$}{etamu}}
Applying the lower bound approximation using \eqref{eq:elbolb}, we obtain the conjugate structure for $a_h^{k, \ell}$:
\begin{align*}
   \mathrm{ELBO}_{\boldsymbol{\eta}}\left(a_h^{k, \ell}\right)&\geq \left[\psi\left(\eta_{a, k, \ell, h, 1}\right)-\log \left(\eta_{a, k, \ell, h, 2}\right)\right] \left\{ \left[\bar{a}_{k,\ell,h} \left(  \psi\left(\bar{a}_{k,\ell,h} + \bar{b}_{k,\ell,h}\right) - \psi\left(\bar{a}_{k,\ell,h}\right) \right) \right.\right.\\  
    &\left.\left. + \bar{a}_{k,\ell,h}\bar{b}_{k,\ell,h}\psi'\left(\bar{a}_{k,\ell,h} + \bar{b}_{k,\ell,h}\right) \left(\psi\left(\eta_{b, k, \ell, h, 1}\right)-\log \left(\eta_{b, k, \ell, h, 2}\right) - \log \bar{b}_{k,\ell,h}\right) \right]\right. \\
    &\left. \times\mathrm{E}_{q_{\eta_{\mathbf{B}, \mathbf{w}, \mathbf{z}}}}\left[N_h^{k, \ell}\right] + c_a - \eta_{a,k,\ell,h,1}   \right\} \\
    &+ \frac{\eta_{a, k, \ell, h, 1}}{\eta_{a, k, \ell, h, 2}}\left[\sum_{(i, j) \in \mathcal{T}_{k, \ell}} \eta_{B_{i, j}} \eta_{W_{i, j}} \eta_{Z_{i, y, h}} \log \left(\frac{t^{(r)}_i-t^{(r)}_j}{T_b}\right)+d_a-\eta_{a, k, \ell, h, 2}\right]
\end{align*}
\begin{align*}
   \mathrm{ELBO}_{\boldsymbol{\eta}}\left(b_h^{k, \ell}\right)&\geq \left[\psi\left(\eta_{b, k, \ell, h, 1}\right)-\log \left(\eta_{b, k, \ell, h, 2}\right)\right] \left\{ \left[\bar{b}_{k,\ell,h} \left(  \psi\left(\bar{a}_{k,\ell,h} + \bar{b}_{k,\ell,h}\right) - \psi\left(\bar{b}_{k,\ell,h}\right) \right) \right.\right.\\  
    &\left.\left. + \bar{a}_{k,\ell,h}\bar{b}_{k,\ell,h}\psi'\left(\bar{a}_{k,\ell,h} + \bar{b}_{k,\ell,h}\right) \left(\psi\left(\eta_{a, k, \ell, h, 1}\right)-\log \left(\eta_{a, k, \ell, h, 2}\right) - \log \bar{a}_{k,\ell,h}\right) \right]\right. \\
    &\left. \times\mathrm{E}_{q_{\eta_{\mathbf{B}, \mathbf{w}, \mathbf{z}}}}\left[N_h^{k, \ell}\right] + c_b - \eta_{b,k,\ell,h,1}   \right\} \\
    &+ \frac{\eta_{b, k, \ell, h, 1}}{\eta_{b, k, \ell, h, 2}}\left[\sum_{(i, j) \in \mathcal{T}_{k, \ell}} \eta_{B_{i, j}} \eta_{W_{i, j}} \eta_{Z_{i, y, h}} \log \left(1-\frac{t^{(r)}_i-t^{(r)}_j}{T_b}\right)+d_b-\eta_{b, k, \ell, h, 2}\right]
\end{align*}
We update $\eta_{a, k, \ell, h, 1}^{(r)}, \eta_{a, k, \ell, h, 2}^{(r)}$ by the following formula:
\begin{equation}
\begin{aligned}
    \eta_{a, k, \ell, h, 1}^{(r+1)} &= (1 - \rho_r)\eta_{a, k, \ell, h, 1}^{(r)} + \rho_r \left\{\left[\frac{\eta^{(r)}_{a, k, \ell, h, 1}}{\eta^{(r)}_{a, k, \ell, h, 2}} \left(  \psi\left(\frac{\eta^{(r)}_{a, k, \ell, h, 1}}{\eta^{(r)}_{a, k, \ell, h, 2}} + \frac{\eta^{(r)}_{b, k, \ell, h, 1}}{\eta^{(r)}_{b, k, \ell, h, 2}}\right) - \psi\left(\frac{\eta^{(r)}_{a, k, \ell, h, 1}}{\eta^{(r)}_{a, k, \ell, h, 2}}\right) \right) \right. \right.\\
    &\left.\left.  + \frac{\eta^{(r)}_{a, k, \ell, h, 1}}{\eta^{(r)}_{a, k, \ell, h, 2}}\frac{\eta^{(r)}_{b, k, \ell, h, 1}}{\eta^{(r)}_{b, k, \ell, h, 2}} \psi'\left(\frac{\eta^{(r)}_{a, k, \ell, h, 1}}{\eta^{(r)}_{a, k, \ell, h, 2}} +\frac{\eta^{(r)}_{b, k, \ell, h, 1}}{\eta^{(r)}_{b, k, \ell, h, 2}}\right)\left(\psi\left(\eta^{(r)}_{b, k, \ell, h, 1}\right)-\log \left(\eta^{(r)}_{b, k, \ell, h, 2}\right) - \log \frac{\eta^{(r)}_{b, k, \ell, h, 1}}{\eta^{(r)}_{b, k, \ell, h, 2}}\right)\right]\right.\\
    &\left.\times \kappa^{-1}\sum_{(i, j) \in \mathcal{T}_{k, \ell}} \eta^{(r)}_{B_{i, j}}\left(1-\eta^{(r)}_{W_{i, j}}\right) \eta^{(r)}_{Z_{i, j}} + c_a \right\}\\
    \eta_{a, k, \ell, h, 2}^{(r+1)} &= (1 - \rho_r)\eta_{a, k, \ell, h, 2}^{(r)} + \rho_r \left\{\kappa^{-1}\sum_{(i, j) \in \mathcal{T}_{k, \ell}} \eta_{B^{(r)}_{i, j}} \eta_{W^{(r)}_{i, j}} \eta_{Z^{(r)}_{i, j, h}} \log \left(\frac{t^{(r)}_i-t^{(r)}_j}{T_b}\right)+d_a \right\}.
\end{aligned}    
\end{equation}
Similarly, we can update $\eta_{b, k, \ell, h, 1}^{(r)}, \eta_{b, k, \ell, h, 2}^{(r)}$ as follows:
\begin{equation}
\begin{aligned}
    \eta_{b, k, \ell, h, 1}^{(r+1)} &= (1 - \rho_r)\eta_{b, k, \ell, h, 1}^{(r)} + \rho_r \left\{\left[\frac{\eta^{(r)}_{b, k, \ell, h, 1}}{\eta^{(r)}_{b, k, \ell, h, 2}} \left(  \psi\left(\frac{\eta^{(r)}_{a, k, \ell, h, 1}}{\eta^{(r)}_{a, k, \ell, h, 2}} + \frac{\eta^{(r)}_{b, k, \ell, h, 1}}{\eta^{(r)}_{b, k, \ell, h, 2}}\right) - \psi\left(\frac{\eta^{(r)}_{b, k, \ell, h, 1}}{\eta^{(r)}_{b, k, \ell, h, 2}}\right) \right) \right. \right.\\
    &\left.\left.  + \frac{\eta^{(r)}_{a, k, \ell, h, 1}}{\eta^{(r)}_{a, k, \ell, h, 2}}\frac{\eta^{(r)}_{b, k, \ell, h, 1}}{\eta^{(r)}_{b, k, \ell, h, 2}} \psi'\left(\frac{\eta^{(r)}_{a, k, \ell, h, 1}}{\eta^{(r)}_{a, k, \ell, h, 2}} +\frac{\eta^{(r)}_{b, k, \ell, h, 1}}{\eta^{(r)}_{b, k, \ell, h, 2}}\right)\left(\psi\left(\eta^{(r)}_{a, k, \ell, h, 1}\right)-\log \left(\eta^{(r)}_{a, k, \ell, h, 2}\right) - \log \frac{\eta^{(r)}_{a, k, \ell, h, 1}}{\eta^{(r)}_{a, k, \ell, h, 2}}\right)\right]\right.\\
    &\left.\times \kappa^{-1}\sum_{(i, j) \in \mathcal{T}_{k, \ell}} \eta^{(r)}_{B_{i, j}}\left(1-\eta^{(r)}_{W_{i, j}}\right) \eta^{(r)}_{Z_{i, j}} + c_b \right\}\\
    \eta_{b, k, \ell, h, 2}^{(r+1)} &= (1 - \rho_r)\eta_{b, k, \ell, h, 2}^{(r)} + \rho_r \left\{\kappa^{-1}\sum_{(i, j) \in \mathcal{T}_{k, \ell}} \eta_{B^{(r)}_{i, j}} \eta_{W^{(r)}_{i, j}} \eta_{Z^{(r)}_{i, j, h}} \log \left(1-\frac{t^{(r)}_i-t^{(r)}_j}{T_b}\right)+d_b \right\}.
\end{aligned}    
\end{equation}
We can update $\eta_{a, 0, h, 1}^{(r)}, \eta_{a, 0, h, 2}^{(r)}, \eta_{b, 0, h, 1}^{(r)}, \eta_{b, 0, h, 2}^{(r)}$ in the same way:
\begin{equation}
\begin{aligned}
    \eta_{a, 0, h, 1}^{(r+1)} &= (1 - \rho_r)\eta_{a,0,h, 1}^{(r)} + \rho_r \left\{\left[\frac{\eta^{(r)}_{a, 0, h, 1}}{\eta^{(r)}_{a,0, h, 2}} \left(  \psi\left(\frac{\eta^{(r)}_{a, k, \ell, h, 1}}{\eta^{(r)}_{a, 0, h, 2}} + \frac{\eta^{(r)}_{b, 0, h, 1}}{\eta^{(r)}_{b, 0, h, 2}}\right) - \psi\left(\frac{\eta^{(r)}_{a, 0, h, 1}}{\eta^{(r)}_{a, 0, h, 2}}\right) \right) \right. \right.\\
    &\left.\left.  + \frac{\eta^{(r)}_{a, 0, h, 1}}{\eta^{(r)}_{a, 0, h, 2}}\frac{\eta^{(r)}_{b, 0, h, 1}}{\eta^{(r)}_{b, 0, h, 2}} \psi'\left(\frac{\eta^{(r)}_{a, 0, h, 1}}{\eta^{(r)}_{a, 0, h, 2}} +\frac{\eta^{(r)}_{b, 0, h, 1}}{\eta^{(r)}_{b, 0, h, 2}}\right)\left(\psi\left(\eta^{(r)}_{b, 0, h, 1}\right)-\log \left(\eta^{(r)}_{b, 0, h, 2}\right) - \log \frac{\eta^{(r)}_{b, 0, h, 1}}{\eta^{(r)}_{b, 0, h, 2}}\right)\right]\right.\\
    &\left.\times \kappa^{-1} \sum_{k=1}^K \sum_{\ell=1}^K \sum_{(i, j) \in \mathcal{T}_{k, \ell}} \eta_{B_{i, j}^{(r)}} \eta_{W_{i, j}^{(r)}} \eta_{Z_{i, j, h}^{(r)}} + c_a \right\}\\
    \eta_{a, 0, h, 2}^{(r+1)} &= (1 - \rho_r)\eta_{a, 0, h, 2}^{(r)} + \rho_r \left\{\kappa^{-1}\sum_{k=1}^K \sum_{\ell=1}^K \sum_{(i, j) \in \mathcal{T}_{k, \ell}} \eta_{B^{(r)}_{i, j}}\left(1-\eta_{W^{(r)}_{i, j}}\right) \eta_{Z^{(r)}_{i, j, h}} \log \left(\frac{t^{(r)}_i-t^{(r)}_j}{T_b}\right)+d_a \right\}.
\end{aligned}    
\end{equation}
\begin{equation}
\begin{aligned}
    \eta_{b, 0, h, 1}^{(r+1)} &= (1 - \rho_r)\eta_{b,0,h, 1}^{(r)} + \rho_r \left\{\left[\frac{\eta^{(r)}_{b, 0, h, 1}}{\eta^{(r)}_{b,0, h, 2}} \left(  \psi\left(\frac{\eta^{(r)}_{a, k, \ell, h, 1}}{\eta^{(r)}_{a, 0, h, 2}} + \frac{\eta^{(r)}_{b, 0, h, 1}}{\eta^{(r)}_{b, 0, h, 2}}\right) - \psi\left(\frac{\eta^{(r)}_{b, 0, h, 1}}{\eta^{(r)}_{b, 0, h, 2}}\right) \right) \right. \right.\\
    &\left.\left.  + \frac{\eta^{(r)}_{a, 0, h, 1}}{\eta^{(r)}_{a, 0, h, 2}}\frac{\eta^{(r)}_{b, 0, h, 1}}{\eta^{(r)}_{b, 0, h, 2}} \psi'\left(\frac{\eta^{(r)}_{a, 0, h, 1}}{\eta^{(r)}_{a, 0, h, 2}} +\frac{\eta^{(r)}_{b, 0, h, 1}}{\eta^{(r)}_{b, 0, h, 2}}\right)\left(\psi\left(\eta^{(r)}_{a, 0, h, 1}\right)-\log \left(\eta^{(r)}_{a, 0, h, 2}\right) - \log \frac{\eta^{(r)}_{a, 0, h, 1}}{\eta^{(r)}_{a, 0, h, 2}}\right)\right]\right.\\
    &\left.\times \kappa^{-1} \sum_{k=1}^K \sum_{\ell=1}^K \sum_{(i, j) \in \mathcal{T}_{k, \ell}} \eta_{B_{i, j}^{(r)}} \eta_{W_{i, j}^{(r)}} \eta_{Z_{i, j, h}^{(r)}} + c_b \right\}\\
    \eta_{b, 0, h, 2}^{(r+1)} &= (1 - \rho_r)\eta_{b, 0, h, 2}^{(r)} + \rho_r \left\{\kappa^{-1}\sum_{k=1}^K \sum_{\ell=1}^K \sum_{(i, j) \in \mathcal{T}_{k, \ell}} \eta_{B^{(r)}_{i, j}}\left(1-\eta_{W^{(r)}_{i, j}}\right) \eta_{Z^{(r)}_{i, j, h}} \log \left(\frac{t^{(r)}_i-t^{(r)}_j}{T_b}\right)+d_b \right\}.
\end{aligned}    
\end{equation}

\subsection{Updating \texorpdfstring{$\boldsymbol{\eta}_{W}, \boldsymbol{\eta}_Z$}{etaW}}


Let $Q(t \mid \eta_{a,1}, \eta_{a,2}, \eta_{b,1}, \eta_{b,2}) := \operatorname{E}_q \left[f_{\text{Beta}}\left(t \mid a,b; T_0 \right)\right]$. Based on the approximation in Section \ref{sec:svi}, we have 

\begin{equation}
\begin{aligned}
\label{eq:Q}
    Q(t \mid &\eta_{a,1}, \eta_{a,2}, \eta_{b,1}, \eta_{b,2}) \approx -\log \operatorname{Be}\left(\frac{\eta_{a, 1}}{\eta_{a, 2}}, \frac{\eta_{b, 1}}{\eta_{b, 2}}\right) \Gamma\left(\frac{\eta_{a, 1}}{\eta_{a, 2}}\right) \Gamma\left(\frac{\eta_{b, 1}}{\eta_{b, 2}}\right)\\
    &+\frac{\eta_{a, 1}}{\eta_{a, 2}}\left[\psi\left(\frac{\eta_{a, 1}}{\eta_{a, 2}}+\frac{\eta_{b, 1}}{\eta_{b, 2}}\right)-\psi\left(\frac{\eta_{a, 1}}{\eta_{a, 2}}\right)\right]\left(\psi\left(\eta_{a, 1}\right)-\log \left(\eta_{a, 2}\right)-\log \frac{\eta_{a, 1}}{\eta_{a, 2}}\right)\\
    &+\frac{\eta_{b, 1}^1}{\eta_{b, 2}}\left[\psi\left(\frac{\eta_{a, 1}}{\eta_{a, 2}}+\frac{\eta_{b, 1}}{\eta_{b, 2}}\right)-\psi\left(\frac{\eta_{b, 1}}{\eta_{b, 2}}\right)\right]\left(\psi\left(\eta_{b, 1}\right)-\log \left(\eta_{b, 2}\right)-\log \frac{\eta_{b, 1}}{\eta_{b, 2}}\right)\\
    &+\frac{1}{2} \frac{\eta_{a, 1}}{\eta_{a, 2}}\left[\psi^{\prime}\left(\frac{\eta_{a, 1}}{\eta_{a, 2}}+\frac{\eta_{b, 1}}{\eta_{b, 2}}\right)-\psi^{\prime}\left(\frac{\eta_{a, 1}}{\eta_{a, 2}}\right)\right]\left[\left(\psi\left(\eta_{a, 1}\right)-\log \left(\eta_{a, 1}\right)\right)^2+\psi^{\prime}\left(\eta_{a, 1}\right)\right]\\
    &+\frac{1}{2}{\frac{\eta_{b, 1}}{\eta_{b, 2}}}^2\left[\psi^{\prime}\left(\frac{\eta_{a, 1}}{\eta_{a, 2}}+\frac{\eta_{b, 1}}{\eta_{b, 2}}\right)-\psi^{\prime}\left(\frac{\eta_{b, 1}}{\eta_{b, 2}}\right)\right]\left[\left(\psi\left(\eta_{b, 1}\right)-\log \left(\eta_{b, 1}\right)\right)^2+\psi^{\prime}\left(\eta_{b, 1}\right)\right]\\
    &+\frac{\eta_{a, 1}}{\eta_{a, 2}} \cdot \frac{\eta_{b, 1}}{\eta_{b, 2}} \cdot \psi^{\prime}\left(\frac{\eta_{a, 1}}{\eta_{a, 2}}+\frac{\eta_{b, 1}}{\eta_{b, 2}}\right)\left(\psi\left(\eta_{a, 1}\right)-\log \left(\eta_{a, 2}\right)-\log \frac{\eta_{a, 1}}{\eta_{a, 2}}\right)\left(\psi\left(\eta_{b, 1}\right)-\log \left(\eta_{b, 2}\right)-\log \frac{\eta_{b, 1}}{\eta_{b, 2}}\right) \\
    &+\left(\frac{\eta_{a, 1}}{\eta_{a, 2}}-1\right) \log t+\left(\frac{\eta_{b, 1}}{\eta_{b, 2}}-1\right) \log \left(T_0-t\right)-\left(\frac{\eta_{a, 1}}{\eta_{a, 2}}+\frac{\eta_{b, 1}}{\eta_{b, 2}}-1\right) \log T_0.
\end{aligned}
\end{equation}

\begin{align*}
    \eta_{Z_{i,j,h,0}}^{(r+1)} &\propto \frac{\psi(\eta^{(r+1)}_{p,0,h})}{ \psi(\sum_{h=1}^L\eta^{(r+1)}_{p,0,h})} + Q(t^{(r)}_j - t^{(r)}_i \mid \eta^{(r+1)}_{a_{0,h,1}}, \eta^{(r+1)}_{a_{0,h,2}}, \eta^{(r+1)}_{b_{0,h,1}}, \eta^{(r+1)}_{b_{0,h,2}}), ~~ h = 1, \dots, L_0. \\
    \eta_{Z_{i,j,h,1}}^{(r+1)} &\propto \frac{\psi(\eta^{(r+1)}_{p,k,\ell,h})}{ \psi(\sum_{h=1}^L\eta^{(r+1)}_{p,k,\ell,h})} + Q(t^{(r)}_j - t^{(r)}_i \mid \eta^{(r+1)}_{a_{k,\ell,h,1}}, \eta^{(r+1)}_{a_{k,\ell,h,2}}, \eta^{(r+1)}_{b_{k,\ell,h,1}}, \eta^{(r+1)}_{b_{k,\ell,h,2}}), ~~ h = 1, \dots, L. 
\end{align*}
Thus, we can update $\boldsymbol{\eta}_W$, $\boldsymbol{\eta}_Z$ using the following formula:
\begin{align*}
    \tilde{\eta}_{W_{i,j,1}}^{(r+1)} &= \sum_{h=1}^L \left[\frac{\psi(\eta_{\varepsilon, 2}^{(r+1)})}{\psi(\eta_{\varepsilon, 1}^{(r+1)} + \eta_{\varepsilon, 2}^{(r+1)})} + \frac{\psi(\eta^{(r+1)}_{p,k,\ell,h})}{ \psi(\sum_{h=1}^L\eta^{(r+1)}_{p,k,\ell,h})} + Q(t^{(r)}_j - t^{(r)}_i \mid \eta^{(r+1)}_{a_{k,\ell,h,1}}, \eta^{(r+1)}_{a_{k,\ell,h,2}}, \eta^{(r+1)}_{b_{k,\ell,h,1}}, \eta^{(r+1)}_{b_{k,\ell,h,2}})\right] \\
    \tilde{\eta}_{W_{i,j,0}}^{(r+1)} &= \sum_{h=1}^{L_0} \left[\frac{\psi(\eta_{\varepsilon, 1}^{(r+1)})}{\psi(\eta_{\varepsilon, 1}^{(r+1)} + \eta_{\varepsilon, 2}^{(r+1)})} + \frac{\psi(\eta^{(r+1)}_{p,0,h})}{ \psi(\sum_{h=1}^{L_0}\eta^{(r+1)}_{p,0,h})} + Q(t^{(r)}_j - t^{(r)}_i \mid \eta^{(r+1)}_{a_{0,h,1}}, \eta^{(r+1)}_{a_{0,h,2}}, \eta^{(r+1)}_{b_{0,h,1}}, \eta^{(r+1)}_{b_{0,h,2}})\right] \\
    \eta_{W_{i,j}}^{(r+1)} &= \frac{\tilde{\eta}_{W_{i,j,1}}^{(r+1)}}{\tilde{\eta}_{W_{i,j,0}}^{(r+1)} + \tilde{\eta}_{W_{i,j,1}}^{(r+1)}}
\end{align*}

\subsection{Updating \texorpdfstring{$\boldsymbol{\eta}_{B}$}{etaB}}

For $i = 1, \dots, n, ~~ j = 1, \dots, i$, we have:

\begin{align*}
    \tilde{\eta}^{(r+1)}_{B_{j,i}}&=\log \left(\frac{t_j^{(r)}-t_i^{(r)}}{T}\right)\left[\sum_{h=1}^L \eta_{Z_{i,j,h,1}}^{(r+1)}\eta_{W_{i,j}}^{(r+1)}\left(\frac{\eta_{a,k,\ell,h,1}^{(r+1)}}{\eta_{a,k,\ell,h,2}^{(r+1)}} -1 \right) + \sum_{h=1}^{L_0} \eta_{Z_{i,j,h,0}}^{(r+1)}(1-\eta_{W_{i,j}}^{(r+1)})\left(\frac{\eta_{a,0,h,1}^{(r+1)}}{\eta_{a,0,h,2}^{(r+1)}} -1 \right)\right] \\
    &+\log \left(1-\frac{t_j^{(r)}-t_i^{(r)}}{T}\right)\left[\sum_{h=1}^L \eta_{Z_{i,j,h,1}}^{(r+1)}\eta_{W_{i,j}}^{(r+1)}\left(\frac{\eta_{b,k,\ell,h,1}^{(r+1)}}{\eta_{b,k,\ell,h,2}^{(r+1)}} -1 \right) + \sum_{h=1}^{L_0} \eta_{Z_{i,j,h,0}}^{(r+1)}(1-\eta_{W_{i,j}}^{(r+1)})\left(\frac{\eta_{b,0,h,1}^{(r+1)}}{\eta_{b,0,h,2}^{(r+1)}} -1 \right)\right] \\&+\sum_{h=1}^{L_0} Q\left(t_j^{(r)}-t_i^{(r)} \mid \eta_{a_{0, h, 1}}^{(r+1)}, \eta_{a_{0, h, 2}}^{(r+1)}, \eta_{b_{0, h, 1}}^{(r+1)}, \eta_{b_{0, h, 2}}^{(r+1)}\right) (1-\eta_{W_{i,j}})\eta_{Z_{i,j,h,0}} \\
    &+ \sum_{h=1}^{L} Q\left(t_j^{(r)}-t_i^{(r)} \mid \eta_{a_{k, \ell, h, 1}}^{(r+1)}, \eta_{a_{k, \ell, h, 2}}^{(r+1)}, \eta_{b_{k, \ell, h, 1}}^{(r+1)}, \eta_{b_{k, \ell, h, 2}}^{(r+1)}\right)\eta_{W_{i,j}}\eta_{Z_{i,j,h,1}} \\
    &+ \psi(\eta_{\alpha,k,\ell,1}) -\log(\eta_{\alpha,k,\ell,2}) -\log T_0  \\
    \tilde{\eta}^{(r+1)}_{B_{j,j}}&= \psi(\eta_{\mu,k,1}) - \log(\eta_{\mu,k,2}).
\end{align*}
Finally, we have
\begin{align*}
    {\eta}^{(r+1)}_{B_{j,i}} &= \frac{\tilde{\eta}^{(r+1)}_{B_{j,i}}}{\sum_{j=1}^i\tilde{\eta}^{(r+1)}_{B_{j,i}}}.
\end{align*}

\section{Estimation results under the mis-specified scenario.}

\begin{table}[H]
\centering
\begin{tabular}{cccccccc}
\hline
 &
  \multicolumn{3}{c}{MCMC} &
  \multicolumn{3}{c}{SVI} &
  EM-BK \\ 
 $\varepsilon_{\text{true}}$
 &
  RANDOM &
  IDIO &
  COMMON &
  RANDOM &
  IDIO &
  COMMON &
  - \\ \hline
$0$ &
\begin{tabular}[c]{@{}c@{}}0.060\\ (0.010)\end{tabular} &
\begin{tabular}[c]{@{}c@{}}0.063\\ (0.009)\end{tabular} &
\begin{tabular}[c]{@{}c@{}}0.254\\ (0.001)\end{tabular} &
\begin{tabular}[c]{@{}c@{}}0.159\\ (0.029)\end{tabular} &
\begin{tabular}[c]{@{}c@{}}0.170\\ (0.018)\end{tabular} &
\begin{tabular}[c]{@{}c@{}}0.289\\ (0.010)\end{tabular} &
\begin{tabular}[c]{@{}c@{}}0.177\\ (0.018)\end{tabular} \\
$0.2$ &
\begin{tabular}[c]{@{}c@{}}0.059\\ (0.004)\end{tabular} &
\begin{tabular}[c]{@{}c@{}}0.062\\ (0.005)\end{tabular} &
\begin{tabular}[c]{@{}c@{}}0.204\\ (0.001)\end{tabular} &
\begin{tabular}[c]{@{}c@{}}0.152\\ (0.018)\end{tabular} &
\begin{tabular}[c]{@{}c@{}}0.164\\ (0.020)\end{tabular} &
\begin{tabular}[c]{@{}c@{}}0.221\\ (0.028)\end{tabular} &
\begin{tabular}[c]{@{}c@{}}0.165\\ (0.009)\end{tabular} \\
$0.5$ &
\begin{tabular}[c]{@{}c@{}}0.055\\ (0.013)\end{tabular} &
\begin{tabular}[c]{@{}c@{}}0.061\\ (0.007)\end{tabular} &
\begin{tabular}[c]{@{}c@{}}0.130\\ (0.002)\end{tabular} &
\begin{tabular}[c]{@{}c@{}}0.152\\ (0.017)\end{tabular} &
\begin{tabular}[c]{@{}c@{}}0.160\\ (0.022)\end{tabular} &
\begin{tabular}[c]{@{}c@{}}0.162\\ (0.038)\end{tabular} &
\begin{tabular}[c]{@{}c@{}}0.146\\ (0.004)\end{tabular} \\
$0.8$ &
\begin{tabular}[c]{@{}c@{}}0.044\\ (0.004)\end{tabular} &
\begin{tabular}[c]{@{}c@{}}0.063\\ (0.007)\end{tabular} &
\begin{tabular}[c]{@{}c@{}}0.058\\ (0.002)\end{tabular} &
\begin{tabular}[c]{@{}c@{}}0.165\\ (0.072)\end{tabular} &
\begin{tabular}[c]{@{}c@{}}0.128\\ (0.026)\end{tabular} &
\begin{tabular}[c]{@{}c@{}}0.103\\ (0.056)\end{tabular} &
\begin{tabular}[c]{@{}c@{}}0.150\\ (0.009)\end{tabular} \\
$1$ &
\begin{tabular}[c]{@{}c@{}}0.027\\ (0.004)\end{tabular} &
\begin{tabular}[c]{@{}c@{}}0.061\\ (0.009)\end{tabular} &
\begin{tabular}[c]{@{}c@{}}0.023\\ (0.004)\end{tabular} &
\begin{tabular}[c]{@{}c@{}}0.087\\ (0.060)\end{tabular} &
\begin{tabular}[c]{@{}c@{}}0.105\\ (0.012)\end{tabular} &
\begin{tabular}[c]{@{}c@{}}0.097\\ (0.064)\end{tabular} &
\begin{tabular}[c]{@{}c@{}}0.153\\ (0.007)\end{tabular} \\ \hline
\end{tabular}
\caption{RMISE as a point estimation accuracy metric for all methods under five true information-borrowing ratios for the mis-specified scenario. The values in the grid cells are the average over 10 independently generated datasets, and the standard deviation is shown in the brackets.}
\label{tab:pe-rmise2}
\end{table}

\begin{table}[H]
\centering
\begin{tabular}{ccccccc}
\hline
 &
  \multicolumn{3}{c}{MCMC} &
  \multicolumn{3}{c}{SVI} \\
 $\varepsilon_{\text{true}}$
 &
  RANDOM &
  IDIO &
  COMMON &
  RANDOM &
  IDIO &
  COMMON \\ \hline
$0$ &
\begin{tabular}[c]{@{}c@{}}0.781\\ (0.053)\end{tabular} &
\begin{tabular}[c]{@{}c@{}}0.792\\ (0.090)\end{tabular} &
\begin{tabular}[c]{@{}c@{}}0.044\\ (0.020)\end{tabular} &
\begin{tabular}[c]{@{}c@{}}0.091\\ (0.040)\end{tabular} &
\begin{tabular}[c]{@{}c@{}}0.108\\ (0.040)\end{tabular} &
\begin{tabular}[c]{@{}c@{}}0.064\\ (0.030)\end{tabular} \\ 
$0.2$ &
\begin{tabular}[c]{@{}c@{}}0.813\\ (0.071)\end{tabular} &
\begin{tabular}[c]{@{}c@{}}0.827\\ (0.078)\end{tabular} &
\begin{tabular}[c]{@{}c@{}}0.072\\ (0.047)\end{tabular} &
\begin{tabular}[c]{@{}c@{}}0.068\\ (0.014)\end{tabular} &
\begin{tabular}[c]{@{}c@{}}0.089\\ (0.045)\end{tabular} &
\begin{tabular}[c]{@{}c@{}}0.024\\ (0.006)\end{tabular} \\ 
$0.5$ &
\begin{tabular}[c]{@{}c@{}}0.773\\ (0.089)\end{tabular} &
\begin{tabular}[c]{@{}c@{}}0.840\\ (0.069)\end{tabular} &
\begin{tabular}[c]{@{}c@{}}0.200\\ (0.064)\end{tabular} &
\begin{tabular}[c]{@{}c@{}}0.047\\ (0.018)\end{tabular} &
\begin{tabular}[c]{@{}c@{}}0.080\\ (0.035)\end{tabular} &
\begin{tabular}[c]{@{}c@{}}0.027\\ (0.005)\end{tabular} \\ 
$0.8$ &
\begin{tabular}[c]{@{}c@{}}0.812\\ (0.06)\end{tabular} &
\begin{tabular}[c]{@{}c@{}}0.869\\ (0.072)\end{tabular} &
\begin{tabular}[c]{@{}c@{}}0.493\\ (0.057)\end{tabular} &
\begin{tabular}[c]{@{}c@{}}0.050\\ (0.021)\end{tabular} &
\begin{tabular}[c]{@{}c@{}}0.095\\ (0.020)\end{tabular} &
\begin{tabular}[c]{@{}c@{}}0.044\\ (0.024)\end{tabular} \\ 
$1$ &
\begin{tabular}[c]{@{}c@{}}0.840\\ (0.098)\end{tabular} &
\begin{tabular}[c]{@{}c@{}}0.860\\ (0.085)\end{tabular} &
\begin{tabular}[c]{@{}c@{}}0.759\\ (0.166)\end{tabular} &
\begin{tabular}[c]{@{}c@{}}0.052\\ (0.017)\end{tabular} &
\begin{tabular}[c]{@{}c@{}}0.119\\ (0.017)\end{tabular} &
\begin{tabular}[c]{@{}c@{}}0.052\\ (0.014)\end{tabular} \\ \hline
\end{tabular}
\caption{Coverage rate as an uncertainty estimation accuracy metric for all methods under five true information-borrowing ratios. The values in the grid cells are the average over 10 independently generated datasets, and the standard deviation is shown in the brackets.}
\label{tab:ue-acr2}
\end{table}

\begin{table}[H]
\centering
\begin{tabular}{ccccccc}
\hline
 &
  \multicolumn{3}{c}{MCMC} &
  \multicolumn{3}{c}{SVI} \\
 $\varepsilon_{\text{true}}$&
  RANDOM &
  IDIO &
  COMMON &
  RANDOM &
  IDIO &
  COMMON \\ \hline
$0$ &
\begin{tabular}[c]{@{}c@{}}0.003\\ (0.001)\end{tabular} &
\begin{tabular}[c]{@{}c@{}}0.003\\ (0.001)\end{tabular} &
\begin{tabular}[c]{@{}c@{}}0.057\\ (0.001)\end{tabular} &
\begin{tabular}[c]{@{}c@{}}0.046\\ (0.008)\end{tabular} &
\begin{tabular}[c]{@{}c@{}}0.047\\ (0.006)\end{tabular} &
\begin{tabular}[c]{@{}c@{}}0.071\\ (0.001)\end{tabular} \\ 
$0.2$ &
\begin{tabular}[c]{@{}c@{}}0.003\\ (<0.001)\end{tabular} &
\begin{tabular}[c]{@{}c@{}}0.003\\ (<0.001)\end{tabular} &
\begin{tabular}[c]{@{}c@{}}0.044\\ (0.001)\end{tabular} &
\begin{tabular}[c]{@{}c@{}}0.047\\ (0.006)\end{tabular} &
\begin{tabular}[c]{@{}c@{}}0.047\\ (0.008)\end{tabular} &
\begin{tabular}[c]{@{}c@{}}0.058\\ (0.009)\end{tabular} \\ 
$0.5$ &
\begin{tabular}[c]{@{}c@{}}0.004\\ (0.002)\end{tabular} &
\begin{tabular}[c]{@{}c@{}}0.003\\ (0.001)\end{tabular} &
\begin{tabular}[c]{@{}c@{}}0.026\\ (<0.001)\end{tabular} &
\begin{tabular}[c]{@{}c@{}}0.05\\ (0.008)\end{tabular} &
\begin{tabular}[c]{@{}c@{}}0.047\\ (0.008)\end{tabular} &
\begin{tabular}[c]{@{}c@{}}0.047\\ (0.016)\end{tabular} \\ 
$0.8$ &
\begin{tabular}[c]{@{}c@{}}0.003\\ (0.001)\end{tabular} &
\begin{tabular}[c]{@{}c@{}}0.003\\ (0.001)\end{tabular} &
\begin{tabular}[c]{@{}c@{}}0.008\\ (0.001)\end{tabular} &
\begin{tabular}[c]{@{}c@{}}0.053\\ (0.025)\end{tabular} &
\begin{tabular}[c]{@{}c@{}}0.038\\ (0.011)\end{tabular} &
\begin{tabular}[c]{@{}c@{}}0.034\\ (0.023)\end{tabular} \\ 
$1$ &
\begin{tabular}[c]{@{}c@{}}0.001\\ (<0.001)\end{tabular} &
\begin{tabular}[c]{@{}c@{}}0.003\\ (0.001)\end{tabular} &
\begin{tabular}[c]{@{}c@{}}0.001\\ (<0.001)\end{tabular} &
\begin{tabular}[c]{@{}c@{}}0.03\\ (0.025)\end{tabular} &
\begin{tabular}[c]{@{}c@{}}0.03\\ (0.003)\end{tabular} &
\begin{tabular}[c]{@{}c@{}}0.035\\ (0.026)\end{tabular} \\ \hline
\end{tabular}
\caption{Interval score as an uncertainty estimation accuracy metric for all methods under five true information-borrowing ratios. The values in the grid cells are the average over 10 independently generated datasets, and the standard deviation is shown in the brackets.}
\label{tab:ue-is2}
\end{table}

\section{Estimated Density Plots from the SVI algorithm}

\begin{figure}[H]
    \centering
    \includegraphics[width=\linewidth]{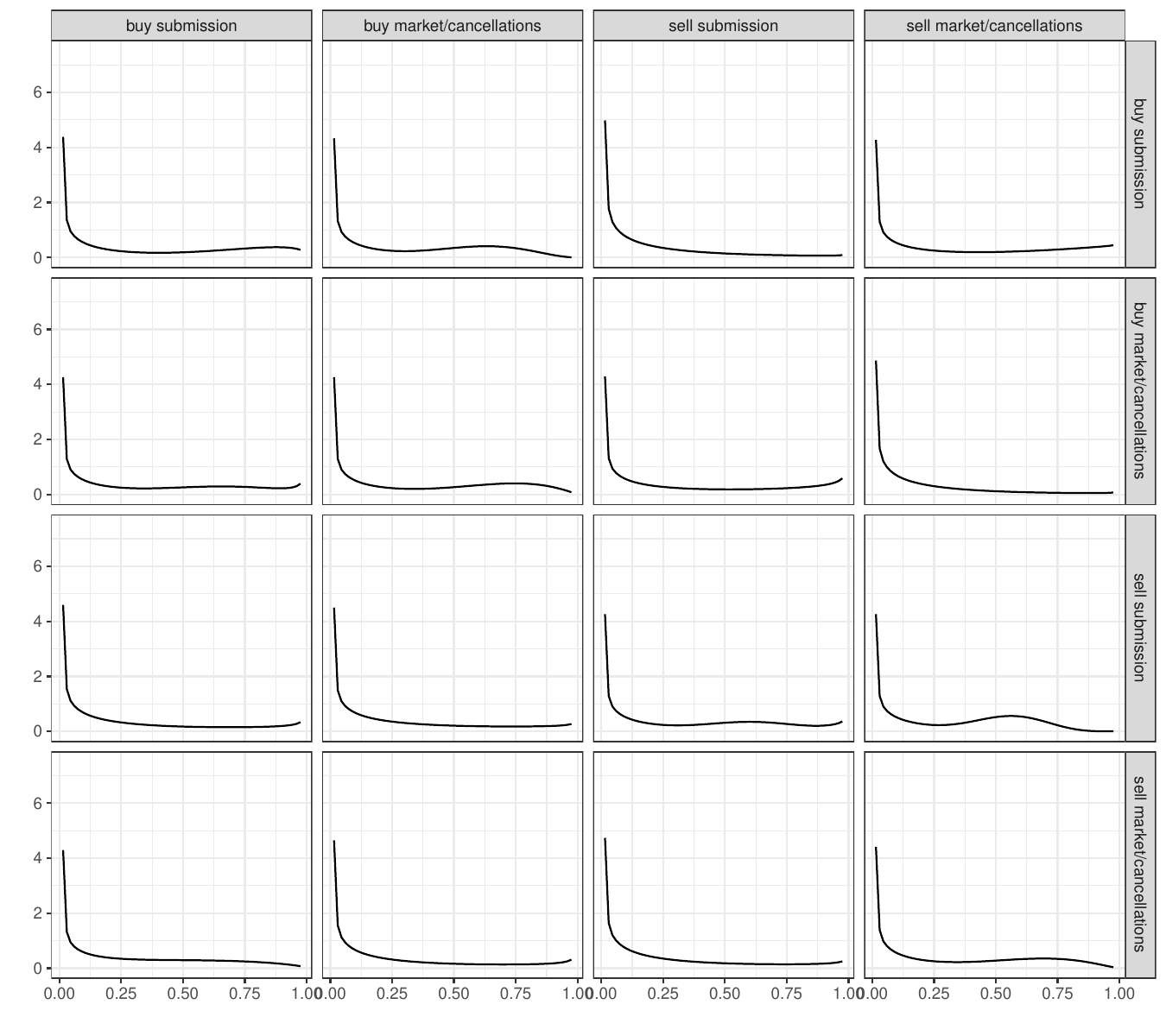}
    \caption{Estimated density plots from the SVI algorithm of the cross-dimensional inter-arrival times (within one second) among the four dimensions.}
    \label{fig:app-svi}
\end{figure}








\end{document}